\renewcommand{\vec}[1]{{\bf #1}}
\providecommand{\keywords}[1]{\textbf{\textit{Index terms---}} #1}
\newcommand{\argmax}{\operatorname{argmax}}
\newcommand{\fig}[1]{Fig.~(\ref{fig:#1})}
\newcommand{\eq}[1]{Eq.~(\ref{eq:#1})}
\renewcommand{\a}{\alpha}
\newcommand{\g}{\gamma}
\renewcommand{\b}{\beta}
\newcommand{\sg}{\sigma}
\renewcommand{\k}{\kappa}
\renewcommand{\t}{\theta}
\newcommand{\NH}{\mathbb{N}}
\newcommand{\RH}{\mathbb{R}}
\newcommand{\set}[1]{\lbrace #1 \rbrace}
\newcommand{\ra}{\rightarrow}
\newcommand{\lgans}{\textquotedblleft}
\newcommand{\nn}{\nonumber}
\numberwithin{equation}{section}      
\theoremstyle{plain}
\newtheorem{thm}{Theorem}[section]  
\newtheorem{lem}[thm]{Lemma} 
\newtheorem{cor}[thm]{Corollary}
\theoremstyle{definition}
\begin{document}
\title{Inferring Volatility in the Heston Model and its Relatives -- an Information Theoretical Approach}

\author{Nils Bertschinger}
\affil{Frankfurt Institute for Advanced Studies, Routh-Mofang-Stra{\ss}e 1, 60483 Frankfurt am Main}
\author{Oliver Pfante}
\affil{Frankfurt Institute for Advanced Studies, Routh-Mofang-Stra{\ss}e 1, 60483 Frankfurt am Main, Germany. Email: pfante@fias.uni-frankfurt.de, Telephone: +49 69 798 475 29}

\date{\today}
\maketitle
\begin{abstract}
  Stochastic volatility models describe asset prices $S_t$ as driven
  by an unobserved process capturing the random dynamics of volatility
  $\sigma_t$. Here, we quantify how much information about $\sigma_t$
  can be inferred from asset prices $S_t$ in terms of Shannon's mutual
  information $I(S_t : \sigma_t)$.

  This motivates a careful numerical and analytical study of
  information theoretic properties of the Heston model.  In addition,
  we study a general class of discrete time models motivated from a
  machine learning perspective. In all cases, we find a large
  uncertainty in volatility estimates for quite fundamental
  information theoretic reasons.

\end{abstract}

\keywords{Information Theory; Stochastic Volatility; Bayesian Analysis}

\section{Introduction}

Many plaudits have been aptly used to describe Black and Scholes' \cite{Black1973} contribution to option pricing theory. However, especially after the 1987 crash, the geometric Brownian motion model and the Black-Scholes formula were unable to reproduce the option price data of real markets. This is not surprising, since the Black-Scholes model makes the strong assumption that log-returns of stocks are normally distributed with a volatility which is not only assumed to be known but also constant over time. Both assumptions on volatility are wrong: first, volatility is a hidden parameter which needs to be inferred from stock and option data, respectively; second, this so called ``implied volatility'' is not constant at all but a highly volatile time-process. The second insight led to the introduction of implied volatility indices like the VIX (1993) and its off-springs, based on the work \cite{Brenner1989,Brenner1993}, which make implied volatility a trademark in its own right subject to similar stochastic movements as stock prices. Among the most relevant statistical properties of these implied volatility processes, volatility seems to be responsible for the observed clustering in price changes. That is, large fluctuations are commonly followed by other large fluctuations and similarly for small changes \cite{Bouchaud2003}. Another feature is that, in clear contrast with price changes which show negligible autocorrelations, volatility autocorrelation is still significant for time lags longer than one year \cite{Perello2008,Muzy2000,Bouchaud2003,Lo1991,Ding1993,LeBaron2001}. Additionally, there exists the so-called leverage effect, i.e., much shorter (few weeks) negative cross-correlation between current price change and future volatility \cite{Bouchaud2001,Bouchaud2003,Black1973,Bollerslev2006}.\\
Accompanying these empirical findings on volatility processes there is also a long history of theoretical attempts to map the time dependency of volatility via stochastic processes -- see \cite{Shepard2005} for an extensive review of the literature. The first stochastic volatility model directly addressing the observed effect of volatility clustering is the one of Taylor \cite{Taylor1982} -- a univariate time-discrete autoregression. \cite{Nelson1994} investigate those ARCH models and their generalized descendants, GARCH and EGARCH, how they estimate conditional variances and covariances properly even in the case they are misspecified.\\
Continuous time models were introduced by Johnson \cite{Johnson1979}. The best known paper in this area is Hull and White \cite{HullWhite1987}. Heston's model \cite{Heston1993}, giving rise to a gamma distributed volatility, is very popular, thanks to the fact that it has a closed-form expression for the characteristic function of its transitional probability. Research in the late 1990s and early 2000s has shown that more complicated stochastic volatility models are needed to model either option or high frequency data since the previously mentioned models do not deal properly with significant changes of volatility on a short time scale. Thus, jumps were incorporated in stochastic volatility \cite{Shepard2001,Polson2003} and in \cite{D.DuffieJ.Pan2000} affine jump-diffusion processes are treated analytically deriving closed form solutions for call options as Heston obtained for his model \cite{Heston1993}. There are also approaches which model the volatility process as a function of a number of separate stochastic processes or factors. In \cite{Tauchen2003} two-factor models are studied with one very slowly mean reverberating factor and another quickly mean reverting one. Such two-factor models provide an alternative to jump-diffusions in breaking the link between tail thickness and volatility persistence. \cite{Masoliver2005} investigates an exponential Ornstein-Uhlenbeck stochastic volatility model and observes that the model shows a multi-scale behaviour in the volatility autocorrelation. It also exhibits a leverage correlation and a probability profile for the stationary volatility which are consistent with market observations. All these features make the model quite appealing since it appears to be more complete than other stochastic volatility models also based on a two-dimensional diffusion discussed in \cite{Tauchen2003}. In summary, there is a vast literature on stochastic volatility models investigating and discussing to which extend those models fit with the data and reproduce the statistical properties of realised volatility of stocks: temporal clustering, leverage effect, volatility autocorrelation, empirical option prices, and the temporal evolution of volatility. Furthermore, papers like \cite{Tauchen2003} and \cite{Nelson1994} strive identify the best model on empirical grounds. I.e., these papers use statistical tests to decide which stochastic volatility models do best in fitting real world data, such as stock prices, volatility indices, etc. Hence, the entire literature reviewed in \cite{Shepard2005} tackles mainly the second previously listed issue with volatility in the classical Black-Scholes model: its time dependency.  \\
The approach presented in this paper has a different focus: Assuming that a stochastic volatility model is correct, i.e., stock prices and daily returns, respectively, follow its dynamics, how reliably can
the hidden volatility inferred from those data according to the stochastic volatility model describing this
data precisely. That is, compared to the previously cited papers, we particularly stress the problem arising from the first mentioned problem concerning Black and Scholes model: the hidden nature of volatility and the only possibility to grasp it via implied volatility derived from market data. If stochastic volatility models describe financial markets properly, how reliable is implied volatility as a volatility measurement. That is, how much information about volatility can be derived from stock data as its observable? Since stochastic volatility models deal with two different distributions for the stock process $S_t$ and its volatility process $\sg_t$, Shannon's information theory \cite{Shannon1948} provides an ideal frame to make this question precise: What is the mutual information $I(S_t : \sg_t)$ between the observed stock prices $S_t$ and the hidden volatility process $\sg_t$? Among the vast amount of stochastic volatility models choices have to be made. First, we consider Heston's model \cite{Heston1993} because the closed form solution of the transitional probabilities of the joint stock and volatility process is analytically appealling. Second, the purely theoretical studies on the Heston model are accompanied by an empirical survey on exponential Ornstein-Uhlenbeck single- and two-factor model which are suggested by \cite{Masoliver2005} as stochastic volatility models fitting observed data best. Hence, the models have been chosen with respect to their claimed theoretical and empirical advantages, respectively, even though we believe the results obtained in the present paper to prevail among a wide class of models. \\
These results are quite disillusioning. Via numerical computations we were able to compute the evolution of the mutual information $I(\sg_t : S_t)$ over time in the Heston model for parameters derived from the VIX in \cite{Ait-Sahalia2007}. It is at most about $0.5$ bit and seems even to diminish for $t \ra \infty$.  
A further interesting finding is the existence of a local maximum of the function $t \ra I(\sg_t : S_t)$ suggesting that there is an intermediate time scale at which the two processes are most closely coupled.
Next, we study a different class of stochastic volatility models, namely the exponential Ornstein-Uhlenbeck model. Here, we generalize the class of models even further by considering arbitrary Gaussian processes driving the logarithm of the volatility. This allows us to utilize powerful tools from machine learning to fit such models to stock price data and compare their relative performance. While we can replicate some of the findings regarding stylized facts on volatility dynamics, e.g. long-range correlations, we again demonstrate that stock prices provide only limited information about the volatility. This obviously has severe implications for volatility predictions. We illustrate our results using different data sets and relate them to the computations for Heston's model. \\
Our paper is structured as follows: First, we present the Heston model and related stochastic volatility models. The third section presents a short overview of information theory and its crucial ingredients: differential entropy, mutual information and its scaling invariance. 
Additionally, we present a multilevel dynamical systems point of view on Heston's model. If one considers the joint process $t \ra (S_t, \sg_t)$ as a microscopic process, and the stock process $t \ra S_t$ as its observable, the previous question whether the inference from stock $S_t$ to volatility $\sg_t$ is reliable leads to a closely related question: to which extend is the stock $t \ra S_t$ a process in its own right? That is, how much does the Heston model differ from the classical Black Scholes model which assumes the Stock process $t \ra S_t$ to be a Markovian process in its own right. Information measures developed by the authors and others in \cite{Pfante2014} address this question. 
The fourth section presents the numerics on the S{\&}P 500 with parameters from \cite{Ait-Sahalia2007} as well as fits of exponential stochastic volatility models. We then conclude by discussing our findings in the light of volatility forecasting.

\section{Stochastic Volatility Models}

There are many versions of Heston's \cite{Heston1993} model depending on whether they care about volatility premium of the risk-neutral model or not. It is an ongoing debate whether it is significant. Evidence is provided by \cite{Hurn2012} whereas in \cite{Ait-Sahalia2007} it is considered statistically insignificant. Whatever the truth might be, the debate demonstrates that the quantitative effects of volatility premium are small. Since our analysis is devoted to the qualitative aspects of Heston's model we adopt its reduced form and follow mainly \cite{Dragulescu2002}.\\
As in most stochastic volatility models, we consider a  stock, whose price $t \ra S_t$ process follows a geometric Brownian motion:
\[
dS_t= \mu S_t dt + \sg_t S_t dW_t^{(1)} \, 
\]
with a fixed initial value $S_0$. $\mu$ is the drift parameter, $W^{(1)}_t$ a standard Brownian motion, $\sg_t$ the time dependent volatility. Since the geometric Brownian motion only depends on $\sg^2_t$ we introduce the variance $v_t = \sg^2_t$. According to Heston's model the time evolution $t \ra v_t$ follows the Cox-Ingersoll-Ross (CIR) process \cite{Shreve2004} 
\[
dv_t = -\g (v_t -\t) dt + \k \sqrt{v_t} dW_t^{(2)} \, .
\]
where $\t$ is the long-term mean of $v_t$, $\g > 0$ the rate of relaxation to this mean, $\k > 0$ the diffusion parameter influencing the noise level of the variance $v_t$, and $W^{(2)}_t$ is a standard Brownian motion. \\
We allow for a coupling between the two Brownian motions, that is,
\[
dW^{(2)}_t = \rho dW^{(1)}_t + \sqrt{1 - \rho^2}dZ_t \, .
\]
$Z_t$ is a Brownian motion independent of $W^{(1)}_t$, and $\rho \in [-1,1]$ is the instantaneous correlation between $W^{(1)}_t$ and $W^{(2)}_t$. A negative instantaneous correlation $\rho$ is known as the leverage effect \cite{FouqueJean-PierreGeorgePapanicolaou2000}.\\
While the CIR process is plausible, there exist other popular choices to describe the dynamics of the variance $v_t$. For example, the exponential Ornstein-Uhlenbeck model \cite{Masoliver2005} models the logarithm of the variance $y_t = \log \sg_t^2$ as an Ornstein-Uhlenbeck (OU) process, i.e., 
\[ 
dy_t = - \alpha (y_t - \mu) dt + \beta dW_t^{(2)}. 
\]
Under special conditions, the CIR process can be derived from an OU process: if $v_t$ is an CIR process then $\sqrt{v_t}$ is an OU process provided that $\gamma \theta = \frac{1}{4} \kappa^2$. \\
The Ornstein-Uhlenbeck process is an example of a wide class of stochastic processes, namely Gaussian processes. These have the property that for any finite collections of times $t_1, \ldots, t_n$, the observations of the process $y_{t_1}, \ldots, y_{t_n}$ have a multivariate Gaussian distribution. This in turn implies that the process is completely specified by its mean process $\mu_t = \mathbb{E}[y_t]$ and the covariance function $k_{t,t'} = \mathbb{E}[(y_t - \mu_t)(y_{t'} - \mu_{t'})]$. For the Ornstein-Uhlenbeck process, the stationary mean process is given by $\mu^{OU}_t \equiv \mu$ and the covariance is well known to be $k^{OU}_{t,t'} = \frac{\beta^2}{2 \alpha} e^{- \alpha |t - t'|}$. Below, we also consider other Gaussian processes thereby generalizing the class of stochastic volatility models beyond what is usually studied in finance.

\subsection{Heston's Model}
It is convenient to change the variable from price $S_t$ to the adjusted log-return $x_t = \log(S_t/S_0) - \mu  t$. Using It\^{o}'s formula \cite{Oksendal2000} we obtain 
\begin{equation} \label{eq:log_return}
d x_t = -\dfrac{v_t}{2} dt + \sqrt{v_t}dW^{(1)}_t \, .
\end{equation}
The two stochastic differential equations (SDE) for the variance $v_t$ and the adjusted log-return $x_t$ define a two-dimensional stochastic process
\begin{equation}\label{eq:3}
\left( \begin{array}{c}
x_t \\
v_t
\end{array} \right) = - \left( \begin{array}{c} 
v_t/2 \\
\g  (v -\t) 
\end{array} \right) dt
 + \sqrt{v_t} \left( \begin{array}{cc}
1 & 0 \\
\k \rho & \k \sqrt{1-\rho^2}
\end{array} \right) \left( \begin{array}{c}
dW^{(1)}_t \\
dZ_t
\end{array} \right) \, .
\end{equation}
The stochastic process $t \ra (x_t, v_t)$ possesses a joint probability density $p(x_t, v_t|v_0)$ with the initial condition $v = v_0$ for the variance. The initial condition of the SDE for the adjusted log-return $x_t$ is always $0$ by definition and therefore not mentioned explicitly. The time evolution of the joint probability density is governed by the Fokker-Planck or Kolmogorov forward equation
\begin{equation} \label{eq:1}
\partial_t p = \g \partial_v [(v-\t)p] + \dfrac{1}{2} \partial_x(vp)  + \rho\k \partial_{xv} (vp) + \dfrac{1}{2} \partial_{xx}(vp) + \dfrac{\k^2}{2} 
\partial_{vv} (vp) \, .
\end{equation}
The initial condition for this linear partial differential equation (PDE) is 
\begin{equation} \label{eq:2}
p(x_0, v_i|v_0) = \delta(x)\delta(v_i-v_0) \, .
\end{equation} 
The boundary conditions for the PDE \eq{1} at $v=0$ is governed by the boundary condition of the PDE for the probability density $\pi_t(v)$ of the variance $v_t$
\begin{equation} \label{eq:stat}
\partial_t \pi(v) = \partial_v [\g(v-\t)\pi(v)] + \dfrac{\k^2}{2}\partial_{vv}[v\pi(v)] \, .
\end{equation}
Feller \cite{Feller1951} carries out a detailed study of this parabolic PDE. For this process the classification of the boundary at the origin $v=0$ is as follows \cite{Karlin1981}: the origin is (instantaneously) reflecting, regular, and attainable if the Feller-constraint 
\[
\k^2 \leq 2\g\t
\]
is violated, and unattainable, non-attracting otherwise. Since we deal with parameter sets derived from market data only, Feller's constrained is reasonably fulfilled because variance of real stocks is always assumed greater than zero. Indeed, the parameter set derived in \cite{Ait-Sahalia2007} from the VIX provides $\a = 2\g\t/\k^2 = 2.011$. Hence, considering an attracting boundary of the PDE \eq{1} governed by the zero-flux condition \cite{Lucic2012} is not necessary, and we simply set $p(x_t,v=0|v_0)  = 0$ at the boundary $v=0$ for all time $t\geq 0$. In any case, the PDE \eq{2} has a stationary solution 
\begin{equation} \label{eq:variance}
\pi_\ast(v) = \dfrac{\a^\a}{\Gamma(\a)} \dfrac{v^{\a-1}}{\t^\a} e^{-\a v/\t}, \quad \a = \dfrac{2\g\t}{\k^2} \, 
\end{equation}
which is the Gamma distribution with shape $\a$ and rate $\b = \a/\t$. It has mean $\t$ and variance $\t/\a$. Feller's constraint implies $\a \geq 1$ and when $\a \ra \infty $ we obtain $\pi_\ast(v) \ra \delta(v-\t)$.
\begin{figure}[h] 
\includegraphics[width=\linewidth]{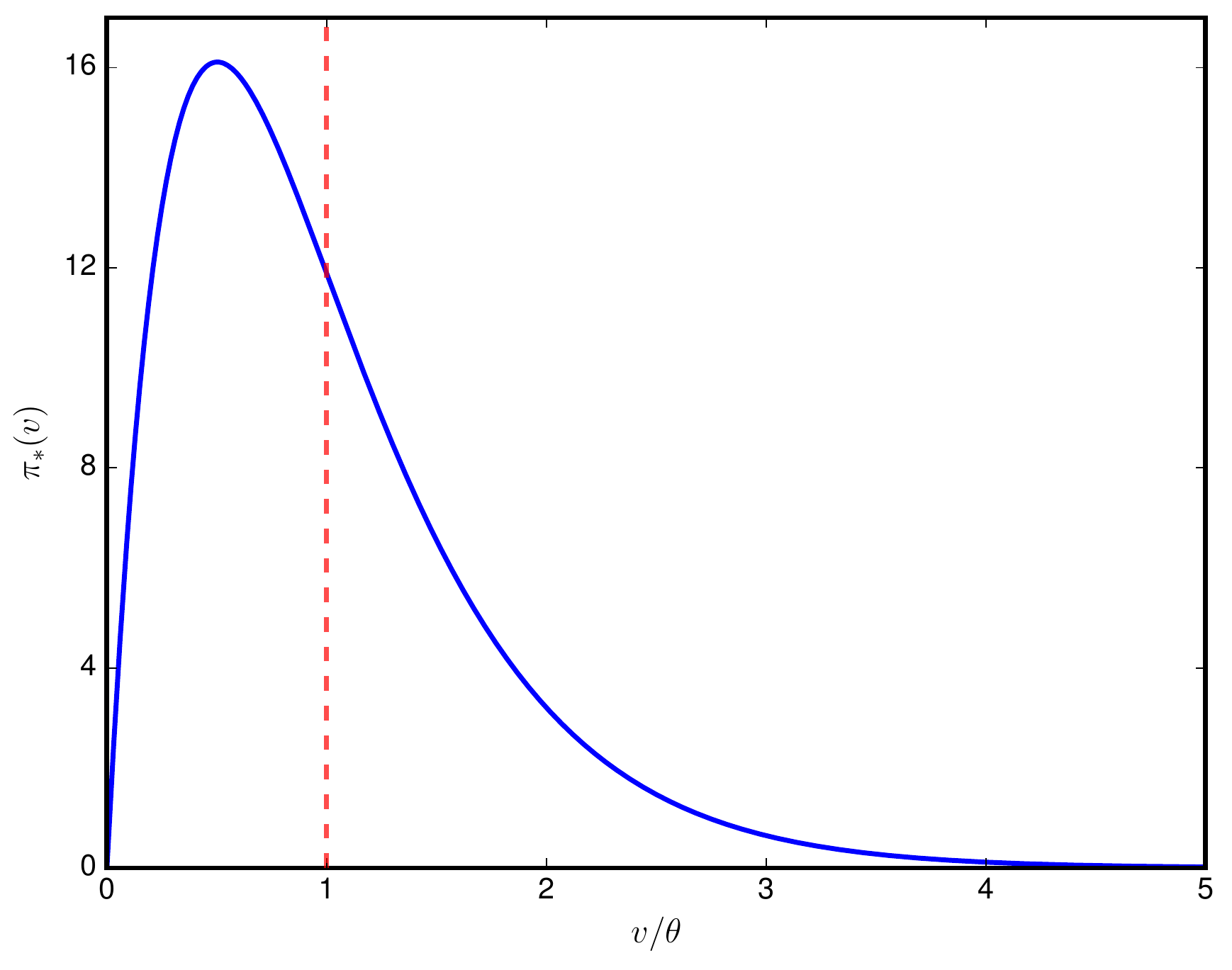}
\caption{\label{fig:fig_1} The stationary probability distribution $\pi_\ast(v)$ of the variance $v$ shown for $\a = 2.011$. The vertical line indicates the mean value of $v$.}
\end{figure}
Working with probability densities and their corresponding partial differential equations allows for a greater flexibility concerning the choice of initial conditions. For stochastic differential equations restricting the variance $v = v_0$ to a certain initial value $v_0$ at $t=0$ is the usual choice even though, in contrast to the returns $x_t$, we do not know the precise value $v_0$ of $v_t$ at $t = 0$. The choice of a random $v_0$ with distribution $\pi_\ast$ is more sensible because variance is
stationary distributed if Heston's model holds true for stock markets. Dealing with those initial conditions is barely possible within the SDE frame but setting 
\begin{equation} \label{eq:4}
p(x_0, v_0) = \delta(x_0) \pi_\ast(v_0)
\end{equation}
as initial condition for the joint probability density subject to the PDE \eq{1} is no problem at all and is adopted for our numerical simulation of the PDE \eq{1}. Since the PDE \eq{1} is linear, any linear combination of solutions is a solution as well. If $p(x_t,v_t | v_0)$ denotes a solution for the initial condition \eq{2}, the weighted average
\[
\int dv_0 \, p(x_t,v_t|v_0) \pi_\ast(v_0)
\]
is a solution of the PDE \eq{1} as well with initial value
\[
\int dv_0 \, p(x_0,v|v_0) \pi_\ast(v_0)  = \int dv_0\, \delta(x_0)\delta(v-v_0) \pi_\ast(v_0)= \delta(x_0)\pi_\ast(v_0) \, .
\]
Hence, solving the PDE \eq{1} for the initial condition \eq{4} yields as solution the weighted average of all solutions with known initial variance $v_0$.

\subsection{Solving the Fokker-Planck equation}
There exists an analytical solution of the Fokker-Planck equation \eq{1} in form of Fourier and inverse Laplace transforms which we shortly derive following \cite{Dragulescu2002}. First, we take the Fourier transform.
\[
p(x_t,v_t|v_0) = \int_{-\infty}^{\infty} \dfrac{dp_x}{2\pi} \exp(ip_x x_t)\bar{p}_{p_x}(v_t|v_0) \, ,
\]
i.e., $\bar{p}_{p_x}(v_t|v_0)$ is the Fourier transform of $p(x_t,v_t|v_0)$ w.r.t. $x_t$. Second, the Laplace transform 
\begin{equation} \label{eq:laplace}
\tilde{p}_{t, p_x}(p_v|v_0) = \int_{0}^{\infty} dv \exp(-p_v v) \bar{p}_{p_x}(v_t=v|v_0) \, .
\end{equation}
solves the partial differential equation of first order
\[
\left[ \partial_t + \left( \Gamma p_v + \dfrac{\k^2}{2}p_v^2 -  \dfrac{p_x^2-ip_x}{2} \right) \partial_{p_v} \right] \tilde{p} = -\g \t p_v \tilde{p}
\]
where we introduced the notation
\[
\Gamma = \g + i \rho \k p_x \,.
\]
This PDE has to be solved with initial condition
\[
\tilde{p}_{t=0, p_x}(p_v|v_0) = \exp(-p_v v_0) \, .
\]
The solution of this initial problem is given by the method of characteristics
\[
\tilde{p}_{t,p_x}(p_v|v_0) = \exp \left( -\tilde{p}_v(0)v_0 - \g \t \int_0^t d\tau \, \tilde{p}_v(\tau) \right) \, 
\]
where the function $\tilde{p}_v(\tau)$ is the solution of the characteristic (ordinary) differential equation
\[
\dfrac{\tilde{p}_v(\tau)}{d\tau} = \Gamma \tilde{p}_v(\tau) + \dfrac{\k^2}{2} \tilde{p}_v^2(\tau) - \dfrac{p_x^2 - \text{i}p_x}{2}
\]
with the boundary condition $\tilde{p}_v(t) = p_v$ specified at $\tau = t$. The differential equation is of the Riccati type with constant coefficients, and its solution is
\[
\tilde{p}_v(\tau) = \dfrac{2\Omega}{\k^2}\left( \zeta e^{\Omega(t-\tau)} - 1\right)^{-1} - \dfrac{\Gamma - \Omega}{\k^2} \, ,
\]
where we introduced the frequency 
\[
\Omega = \sqrt{\Gamma^2 +\k^2(p_x^2 -\text{i}p_x)}
\]
and the coefficient
\[
\zeta = 1 + \dfrac{2\Omega}{\k^2 p_v +(\Gamma - \Omega)} \, .
\]
Substituting yields
\begin{equation} \label{eq:solution}
\tilde{p}_{t, p_x}(p_v|v_0) =
 \exp \left( - \tilde{p}_v(0)v_0 + \dfrac{\a(\Gamma - \Omega) t}{2} - \a \log \dfrac{\zeta - e^{-\Omega t}}{\zeta -1}\right) \, 
\end{equation}
with $\a$ defined in \eq{variance}. Since we are interested in a solution of the Fokker-Planck equation w.r.t. to the initial conditions \eq{4} we take the average of all solution \eq{solution} with weight $\pi_\ast(v_0)$ and obtain
\begin{align} \label{eq:stat_sol}
\tilde{p}_{t, p_x}(p_v) &= \int_0^{\infty}dv_0 \, \tilde{p}_{t, p_x}(p_v|v_0)  \pi(v_0) \nn\\
 &= \left(1 +\dfrac{\t \tilde{p}_v(0)}{\a} \right)^{\a} \left( \dfrac{\zeta - e^{-\Omega t}}{\zeta -1} \right)^{\a} \exp \left( \dfrac{\a(\Gamma - \Omega) t}{2} \right) \, .
\end{align}
The solution $p(x_t,v_t)$ of \eq{1} with initial condition \eq{4} can be efficiently computed from \eq{stat_sol} applying the Stehfest method of degree $12$ -- see \cite{Cheng1994} -- to compute the inverse of the Laplace transform \eq{laplace}. This procedure performs magnitudes faster than solving the Fokker-Planck equation itself numerically, a method we tested also in order to validate our numerical findings by two unrelated computational approaches. Beside the joint distribution $p(x_t,v_t)$ we also need the conditional distribution 
\[
p(x_t|v_0) = \int_0^{\infty} dv_t \, p(x_t,v_t|v_0)
\]
which is the marginal distribution of the joint density $p(x_t,v_t|v_0)$, that is, the solution of the Fokker-Planck equation \eq{1} w.r.t. the initial conditions \eq{2}. There is a Fourier transform representation of the conditional density \cite{Dragulescu2002}
\begin{align}\label{eq:cond_joint}
p(x_t|v_0) = \int_{-\infty}^{\infty}\dfrac{dp_x}{2\pi} &\exp\left(\text{i} p_x x - 
\dfrac{p_x^2 - \text{i} p_x}{\Gamma  + \Omega \coth(\Omega t/2)}\right) \nn \\
\times&  \exp\left(-\dfrac{2\g\t}{\k^2}\log\left(\cosh \dfrac{\Omega t}{2} + 
\dfrac{\Gamma}{\Omega}\sinh\dfrac{\Omega t}{2}\right) + \dfrac{\g \Gamma \t t}{\k^2} \right) \, 
\end{align}
and the density of the marginal distribution of the returns \cite{Dragulescu2002}
\begin{equation}\label{eq:marginal}
p(x_t) = \int_{-\infty}^{\infty}\dfrac{dp_x}{2\pi} \exp\left(\text{i} p_x x + \dfrac{\g\theta}{\k^2}\Gamma t - \dfrac{2\g\theta}{\k^2} \log \left( \cosh \dfrac{\Omega t}{2} + \dfrac{\Omega^2 - \Gamma^2 + 2\gamma \Gamma}{2 \g \Omega}  \sinh \dfrac{\Omega t}{2} \right)  \right) \, .
\end{equation}
Knowledge of $p(x_t|v_0)$ and $p(x_t)$ allows for a numerical computation of the informational flow \eq{InfoFlow} via lemma \ref{lem_2} and of the mutual information $I(S_{(n+1)\tau}:S_{n\tau})$. All simulations are done with \textsc{Python} using \textsc{Numpy} \cite{Numpy2011}.

\section{Information Theory} 

\subsection{Mutual Information} The \textit{differential entropy} $h(X)$ of a continuous random real-variable $X$ with density $f(x)$ is defined as \cite{Cover2006}
\[
h(X) = - \int f(x) \log f(x) \, dx \, 
\]
if it exists, that is, the previous integral does not diverge. Since we measure information in bits we use the logarithm w.r.t. base $2$. In contrast to the discrete case, there is no canonical definition as one has to choose a reference measure $dx$ w.r.t. which the density is integrated. In the sequel we assume existence of the integrals without mentioning it. Further, we assume that the Lebesgue measure is chosen as reference. Compared to entropy of discrete random variables, differential entropy can be negative. One example provides a normally distributed random variable $X \sim (1/\sqrt{2\pi\sg^2})\exp(-x^2/2\sg^2)$ whose differential entropy is $1/2 \log(2\pi e\sg^2)$ \cite{Cover2006} which can be negative for sufficiently small variances $\sg^2$. Even though it might become negative, the differential entropy, as the entropy of discrete random variables, can be interpreted as a measure of the average uncertainty in the random variable. The differential entropy behaves nicely under diffeomorphic coordinate changes $\phi: S_X \ra \RH$ on the support $S_X = \set{x : f(x) > 0} \subseteq \RH$ of the random variable $X$
\[
h(\phi(X)) = h(X) + \int f(x) \log |\phi'(x)|  \, dx
\]
The \textit{differential entropy} of a set $X_1, X_2, \ldots, X_n$ of random variables with joint density $f(x_1, x_2,$ $ \ldots, x_n)$ is defined as
\[
h(X_1, X_2,  \ldots,  X_n) = - \int  f(x_1,x_2, \ldots, x_n) \log f(x_1, x_2, \ldots, x_n) \, dx_1dx_2 \cdots dx_n \, .
\]
For every differeomophism $\phi:S_\mathbf{X} \ra \RH^n$ on the support $S_\mathbf{X}= \set{\mathbf x = (x_1, x_2, \ldots, x_n):  \, f(x_1, x_2, \ldots, x_n) > 0} $ of the random variables $\mathbf{X} = (X_1, X_2, \ldots, X_n)$ we obtain
\[
h(\phi(\mathbf{X})) = h(\mathbf{X}) + \int f(\mathbf{x}) \log |\det J_\phi(\mathbf{x})| \, d\mathbf{x} 
\]
where $J_\phi (\mathbf{x})$ is the Jacobian of $\phi$.\\
If the random variables $X, Y$ have a joint density function $f(x,y)$ and conditional density function $f(x|y)$, respectively, we can define the \textit{conditional entropy} $h(X|Y)$ as
\[
h(X|Y) = - \int f(x,y) \log f(x|y) \, dx dy \, .
\]
Since in general $f(x|y) = f(x,y)/f(y)$, we can also write
\[
h(X|Y) = h(X,Y)-h(Y) \, .
\]
But we must be careful if any of the differential entropies are infinite. $h(X|Y)$ is a measure of the average uncertainty of the random variable $X$ conditional on the knowledge of another random variable $Y$. \\
The \textit{relative entropy} (or \textit{Kullback Leibler divergence}) $D(f||g)$ between two densities $f$ and $g$ is defined by 
\[
D(f||g) = \int f(x) \log \dfrac{f(x)}{g(x)} \,dx .
\]
Note that $D(f||g)$ is finite only if the support set $\set{x : f(x) > 0}$ of $f$ is contained in the support of $g$ (Motivated by continuity, we set $0 \log (0/0) = 0$.). While $D(f||g)$ is in general not symmetric, it is often considered as a kind of distance between $f$ and $g$. This is mainly due to its property that $D(f||g) \geq 0$ with equality if and only if $f = g$.\\
The \textit{mutual information} $I(X : Y)$ between two random variables $X$ and $Y$ with joint density $f(x, y)$ and respective marginal densities $g(x)$ and $k(y)$ is defined as 
\begin{eqnarray*}
I(X:Y) & = & h(X) - h(X|Y) \\
& = & h(Y) - h(Y|X) \\
& = & h(X) + h(Y) - h(X,Y) \\
& = & \int f(x,y) \log \dfrac{f(x,y)}{g(x)k(y)} \, dx dy \\
& = & D(f(x,y) || g(x)k(y)) \, .
\end{eqnarray*}
Again, we must be careful if any of the differential entropies are infinite and the mutual information might or might not diverge in this case. \\
From the definition it is clear that the mutual information is symmetric, i.e., $I(X:Y) = I(Y:X)$. Further, $I(X:Y) \geq 0$ with equality if and only if $X$ and $Y$ are independent. Thus, mutual information can be considered as a general measure of statistical dependence as it detects any deviations of independence. Note that in the case of independence, knowledge of $X$ does not reduce our uncertainty about $Y$, i.e., $X$ provides no information about $Y$, and vice versa. \\
Like mutual information, \textit{conditional mutual information} $I(X : Y|Z) $ between three random variables $X, Y$ and $Z$ can be written in terms of conditional entropies as
\[
I(X : Y|Z) = h(X|Z) - h(X|Y,Z)
\]  
assuming the differential entropies exist. This can be rewritten to show its relationship to mutual 
information
\[
I(X:Y|Z) = I(X: Y,Z) - I(X:Z)
\]
usually rearranged as the \textit{chain rule of mutual information} \cite{Cover2006}
\begin{equation} \label{eq:chain_rule}
I(X:Y,Z) = I(X:Y|Z)+I(X:Z) \, . 
\end{equation}
In contrast to differential entropy, mutual information and conditional mutual information are scaling invariant, that is, for three diffeomorphic maps $\phi_X, \phi_Y, \phi_Z: \RH \ra \RH$ we have \cite{Kraskov2004}
\begin{align*}
I(\phi_X(X) : \phi_Y(Y)&) = I(X:Y) \\
I(\phi_X(X) : \phi_Y(Y)&| \phi_Z(Z)) = I(X:Y|Z).
\end{align*}

In the last section we defined
\begin{align*}
x_t &= \log(S_t/S_0) - \mu t \\
v_t &= \sg^2_t \, .
\end{align*}
Both transformations are diffeomorphic on the support of the respective random variables. Scaling invariance of the mutual information yields

\begin{lem}\label{lem_1}
The mutual information between stock and volatility is the same as the one between adjusted log-returns and variance.
\[
I(\sg_t:S_t) = I(v_t:x_t)
\]
\end{lem}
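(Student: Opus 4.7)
The plan is to invoke directly the scaling-invariance property of mutual information under diffeomorphic coordinate changes that was recorded just before the lemma statement. The two transformations relating the pairs $(S_t, \sigma_t)$ and $(x_t, v_t)$ are
\[
\phi_S : S \mapsto \log(S/S_0) - \mu t, \qquad \phi_\sigma : \sigma \mapsto \sigma^2,
\]
viewed as maps between the supports of the respective random variables. Since prices $S_t$ and volatilities $\sigma_t$ take values in $(0,\infty)$, I would first verify that both maps are diffeomorphisms on their domains: $\phi_S$ is the composition of the logarithm (a smooth bijection $(0,\infty)\to\RH$ with smooth inverse) with an affine translation, and $\phi_\sigma$ is smooth with smooth inverse $v \mapsto \sqrt{v}$ on $(0,\infty)$. (The instant $t$ is regarded as fixed throughout, so the term $\mu t$ is just a constant translation.)

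Given this, the proof reduces to a single application of the identity
\[
I(\phi_X(X) : \phi_Y(Y)) = I(X:Y)
\]
cited from \cite{Kraskov2004}, taking $X = S_t$, $Y = \sigma_t$, $\phi_X = \phi_S$ and $\phi_Y = \phi_\sigma$. This yields $I(x_t : v_t) = I(S_t : \sigma_t)$, which is the claim. Using symmetry of mutual information, the equality can be rewritten in the form stated in the lemma.

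There is essentially no obstacle beyond checking that the supports indeed lie in $(0,\infty)$ so that the maps are genuine diffeomorphisms; the Heston dynamics ensures $S_t > 0$ almost surely, and $\sigma_t = \sqrt{v_t} > 0$ almost surely under Feller's constraint $\kappa^2 \leq 2\gamma\theta$, which is assumed (in fact with $\alpha = 2.011$) in the setup of the paper. Once this is noted, the statement follows immediately from the invariance property, and no explicit manipulation of the Heston transition densities is needed.
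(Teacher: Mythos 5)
Your proposal is correct and follows exactly the paper's own argument: the paper states that both transformations $x_t = \log(S_t/S_0) - \mu t$ and $v_t = \sigma_t^2$ are diffeomorphic on the supports of the respective random variables and then invokes the scaling invariance of mutual information cited from Kraskov et al. Your additional verification that the supports lie in $(0,\infty)$ is a welcome but minor elaboration of the same one-line proof.
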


\subsection{Multilevel Dynamical Systems}
We consider the process $t \ra (S_t, \sg_t)$ with a fixed initial
value $S_0$, the stock price today, and a stationary distributed
volatility $\sg_0$. Since volatility is a hidden parameter, the only
observable is the stock $S_t$. 
Then, we discretize the continuous processes $S_t$ and $\sg_t$ in time
with resolution $\tau$. This is sensible because the observable, that
is, stock data, is in general only quoted at discrete time points,
most commonly at the end of every trading day, i.e., $\tau = 1$
day. From this we obtain a Markov process in discrete time $n \ra
(S_{n\tau}, \sg_{n\tau})$. Since only the stock price can be observed,
we can illustrate the situation as in \fig{levels}
with $\phi$ being the observation map that simply projects onto its
first component. \\
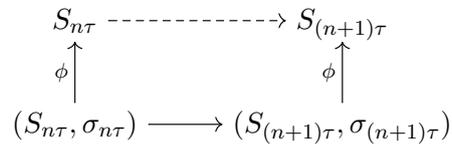
\begin{figure}[h]
  \begin{center}
    \begin{tikzcd}
      S_{n\tau}\arrow[dashed]{r} & S_{(n+1)\tau} \\
      (S_{n\tau},\sg_{n\tau}) \arrow{r}\arrow{u}{\phi} & (S_{(n+1)\tau}, \sg_{(n+1)\tau})\arrow{u}{\phi}
    \end{tikzcd}
  \end{center}
  \caption{\label{fig:levels} Basic setup of a low level process that
    can be observed via an observation map $\phi$.}
\end{figure}
We investigate to which extend the observed process $n \ra S_{n\tau}$
is a stochastic process in its own right.  The dashed line on the
upper level indicates that the upper process is not self-contained,
because $S_{(n+1)\tau}$ cannot in general be derived from $S_{n\tau}$
solely, whereas the solid line on the lower level indicates the
Markovian dynamics for the stochastic differentials for $S_{n\tau}$
and $\sg_{n\tau}$ or, equivalently, $x_{n\tau}$ and $v_{n\tau}$
via \eq{3}. Such multilevel dynamical systems are extensively studied in \cite{Pfante2014} where we introduced various information theoretical measures in order to make the notion \lgans self-contained" precise and the deviation of a process from being self-contained quantifiable. Two of them are of particular interest.\\
\textbf{Informational closure:} We called the higher process to be
informationally closed, if there is no information flow from the lower
to the higher level. Knowledge of the joint state $(S_{n\tau},
\sg_{n\tau})$ will not improve predictions of the stock $S_{(n+1)\tau}$
at time $n+1$ when $S_{n\tau}$ is already known, i.e., for
$S_{(n+1)\tau}=\phi(S_{(n+1)\tau},\sg_{(n+1)\tau})$ we have
 \begin{equation} \label{eq:InfoFlow}
   I(S_{(n+1)\tau} : (S_{n\tau}, \sg_{n\tau}) | S_{n\tau}) = 0 \, .
 \end{equation}
 \textbf{Markovianity:} Markovianity of the upper process $n \ra S_{n\tau}$ is another property that allows to be considered a self-contained process in its own right. In this case $S_{(n+1)\tau}$ is independent of the entire past trajectory 
\[
 \vec{S}^\tau_{n-1}=(S_{(n-1)\tau}, \ldots, S_{\tau},  S_{0}) \, 
\]
given $S_{n\tau}$, which can be expressed again in terms of the conditional mutual information as
\begin{equation} \label{eq:true_markov} I(S_{(n+1)\tau}
  :\vec{S}^\tau_{n-1} | S_{n\tau}) = 0 \, .
\end{equation}

In our setting, i.e., with $S_t$ and $\sg_t$ derived from Heston's
model with initial condition \eq{2} and then discretized according to
\fig{levels} we obtain the following results:

\begin{lem} \label{lem_2} The information flow can be computed as
  \[
  I(S_{(n+1)\tau} : (S_{n\tau}, \sg_{n\tau}) | S_{n\tau}) = I(x_{(n+1)\tau} : v_{n\tau} | x_{n\tau})
  \]
  for all $\tau > 0$ and $n \in \NH$.
\end{lem}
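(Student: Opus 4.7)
My plan is to first peel off the redundant $S_{n\tau}$ component from the joint argument of the conditional mutual information using the chain rule \eqref{eq:chain_rule}, and then apply the one-dimensional scaling invariance of conditional mutual information three times (once per variable) to switch from $(S, \sg)$ to $(x, v)$ coordinates.

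\textbf{Step 1 (collapse of the joint argument).} Conditioning the chain rule \eqref{eq:chain_rule} on $S_{n\tau}$ yields the standard identity
\[
I(S_{(n+1)\tau} : (\sg_{n\tau}, S_{n\tau}) \mid S_{n\tau}) = I(S_{(n+1)\tau} : \sg_{n\tau} \mid S_{n\tau}, S_{n\tau}) + I(S_{(n+1)\tau} : S_{n\tau} \mid S_{n\tau}).
\]
The second summand vanishes because conditioning on $S_{n\tau}$ leaves no residual uncertainty about $S_{n\tau}$, and in the first summand the repeated $S_{n\tau}$ collapses to a single copy since $(S_{n\tau}, S_{n\tau})$ and $S_{n\tau}$ generate the same $\sigma$-algebra. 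Hence
\[
I(S_{(n+1)\tau} : (S_{n\tau}, \sg_{n\tau}) \mid S_{n\tau}) = I(S_{(n+1)\tau} : \sg_{n\tau} \mid S_{n\tau}).
\]

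\textbf{Step 2 (change of coordinates).} The three scalar maps
\begin{align*}
S_{(n+1)\tau} &\mapsto x_{(n+1)\tau} = \log(S_{(n+1)\tau}/S_0) - \mu(n+1)\tau, \\
\sg_{n\tau} &\mapsto v_{n\tau} = \sg_{n\tau}^2, \\
S_{n\tau} &\mapsto x_{n\tau} = \log(S_{n\tau}/S_0) - \mu n\tau,
\end{align*}
are diffeomorphisms on $(0,\infty)$. Applying the conditional version of scaling invariance recorded just before Lemma \ref{lem_1} with these three maps gives
\[
I(S_{(n+1)\tau} : \sg_{n\tau} \mid S_{n\tau}) = I(x_{(n+1)\tau} : v_{n\tau} \mid x_{n\tau}),
\]
and combining with Step 1 establishes the claim for every $\tau > 0$ and $n \in \NH$.

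\textbf{Main obstacle.} There is essentially none beyond careful bookkeeping. The only observation with any content is that the pair $(S_{n\tau}, \sg_{n\tau})$ splits into a redundant coordinate $S_{n\tau}$, which is absorbed by the condition, and an essential coordinate $\sg_{n\tau}$ carrying all the hidden information; once this is noticed, the rest is an immediate application of the chain rule and scalar scaling invariance, both spelled out in the preceding subsection. In particular, this argument uses only the one-dimensional version of scaling invariance as stated in the paper, so no multi-dimensional extension needs to be justified.
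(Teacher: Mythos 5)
Your proof is correct and uses the same two ingredients as the paper's proof --- the chain rule for conditional mutual information and scaling invariance --- just in the opposite order: the paper first changes coordinates to $(x,v)$ and then applies the chain rule, killing the term $I(x_{(n+1)\tau}:x_{n\tau}\mid x_{n\tau},v_{n\tau})$, whereas you first collapse the redundant $S_{n\tau}$ inside the joint argument and only then change coordinates. Your ordering has the small advantage that it only requires the scalar version of scaling invariance as literally stated in the paper, while the paper's first step implicitly invokes a two-dimensional version for the componentwise diffeomorphism $(S_{n\tau},\sg_{n\tau})\mapsto(x_{n\tau},v_{n\tau})$.
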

\begin{proof}
  Lemma \ref{lem_1} implies
  \[
  I(S_{(n+1)\tau} : (S_{n\tau}, \sg_{n\tau}) | S_{n\tau}) =
  I(x_{(n+1)\tau} : (x_{n\tau}, v_{n\tau}) | x_{n\tau})
  \]
  Then, by the chain rule of mutual information \eq{chain_rule} this
  can be expanded as
  \[
    I(x_{(n+1)\tau} : (x_{n\tau}, v_{n\tau}) | x_{n\tau}) = I(x_{(n+1)\tau} : v_{n\tau} | x_{n\tau}) + I(x_{(n+1)\tau} : x_{n\tau}| x_{n\tau}, v_{n\tau}) \, .
  \]
  The result then follows from $I(x_{(n+1)\tau} : x_{n\tau}|
  x_{n\tau}, v_{n\tau}) = 0$.
\end{proof}

As a corollary, for the first step, i.e., $n=0$ we obtain

\begin{cor}
\[
I(S_\tau: (S_0,  \sg_0)|S_0) = I(x_\tau:v_0)
\]
\end{cor}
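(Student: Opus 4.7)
The plan is to apply Lemma \ref{lem_2} at $n=0$ and then observe that the conditioning variable $x_0$ is a deterministic constant, so it drops out.

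First I would specialize Lemma \ref{lem_2} to $n=0$, which immediately gives
\[
I(S_\tau : (S_0, \sg_0) \mid S_0) = I(x_\tau : v_0 \mid x_0).
\]
Next I would invoke the definition of the adjusted log-return, $x_t = \log(S_t/S_0) - \mu t$, which forces $x_0 = \log(S_0/S_0) - \mu \cdot 0 = 0$ almost surely, as already noted right after \eq{log_return} and built into the initial condition \eq{4}. Hence $x_0$ is a deterministic random variable (a point mass at $0$).

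The final step is a standard fact about conditional mutual information: conditioning on a random variable that is almost-surely constant does not alter mutual information, since $h(X \mid C) = h(X)$ and $h(X \mid Y, C) = h(X \mid Y)$ whenever $C$ is constant. Applied to our situation this yields $I(x_\tau : v_0 \mid x_0) = I(x_\tau : v_0)$, and combining with the display above gives the claim. There is really no obstacle here; the statement is a direct specialization of Lemma \ref{lem_2} using the normalization $x_0 \equiv 0$ built into the definition of the adjusted log-return.
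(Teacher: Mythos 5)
Your proof is correct and matches the paper's own argument: the paper likewise obtains the corollary from Lemma \ref{lem_2} at $n=0$ and then drops the implicit condition $x_0 = 0$. Your additional remark that conditioning on an almost-surely constant variable leaves mutual information unchanged simply makes explicit the step the paper states informally.
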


from lemma \ref{lem_2} and by dropping the implicit condition $x_0=0$. Similarly, we can derive a bound on the deviation from Markovianity for the stock process $n \ra S_{n\tau}$ based on the information flow.

\begin{thm} \label{lem_3}
In the setting of \fig{levels}
\[
I(S_{(n+1)\tau} :\vec{S}^\tau_{n-1}  | S_{n\tau}) \leq I(x_{(n+1)\tau} : v_{n\tau} | x_{n\tau}) \, .
\]
for all $\tau > 0$ and $n \in \NH$.
\end{thm}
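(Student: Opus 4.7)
The plan is to reduce everything to the log-return/variance coordinates using Lemma~\ref{lem_1} (scaling invariance), then exploit the fact that the joint process $t \mapsto (x_t, v_t)$ is Markovian while the projection $t \mapsto x_t$ generally is not. The identity to prove has the flavor ``information about the future contained in the observable past is bounded by information contained in the current hidden state,'' which is a standard consequence of the chain rule plus Markovianity of the underlying joint process.

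First I would use Lemma~\ref{lem_1} (and more generally the scale-invariance of conditional mutual information under the coordinate change $S \mapsto x$) to rewrite the left-hand side as
\[
I(S_{(n+1)\tau} : \vec{S}^\tau_{n-1} \mid S_{n\tau}) \;=\; I(x_{(n+1)\tau} : \vec{x}^\tau_{n-1} \mid x_{n\tau}),
\]
where $\vec{x}^\tau_{n-1} = (x_{(n-1)\tau}, \ldots, x_\tau, x_0)$. This puts both sides in the same coordinate system and matches the form of Lemma~\ref{lem_2}.

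Next I would apply the chain rule of mutual information \eq{chain_rule} to the joint quantity $I(x_{(n+1)\tau} : (\vec{x}^\tau_{n-1}, v_{n\tau}) \mid x_{n\tau})$ in two different ways. Expanding with $v_{n\tau}$ revealed first gives
\[
I(x_{(n+1)\tau} : (\vec{x}^\tau_{n-1}, v_{n\tau}) \mid x_{n\tau}) = I(x_{(n+1)\tau} : v_{n\tau} \mid x_{n\tau}) + I(x_{(n+1)\tau} : \vec{x}^\tau_{n-1} \mid x_{n\tau}, v_{n\tau}),
\]
whereas expanding with $\vec{x}^\tau_{n-1}$ revealed first gives
\[
I(x_{(n+1)\tau} : (\vec{x}^\tau_{n-1}, v_{n\tau}) \mid x_{n\tau}) = I(x_{(n+1)\tau} : \vec{x}^\tau_{n-1} \mid x_{n\tau}) + I(x_{(n+1)\tau} : v_{n\tau} \mid x_{n\tau}, \vec{x}^\tau_{n-1}).
\]

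The crucial step is then to observe that the second term in the first decomposition vanishes: conditional on the present joint state $(x_{n\tau}, v_{n\tau})$, the Markov property of the SDE \eq{3} makes $x_{(n+1)\tau}$ independent of the past trajectory $\vec{x}^\tau_{n-1}$, so $I(x_{(n+1)\tau} : \vec{x}^\tau_{n-1} \mid x_{n\tau}, v_{n\tau}) = 0$. Combining the two expansions and using non-negativity of the remaining conditional mutual information $I(x_{(n+1)\tau} : v_{n\tau} \mid x_{n\tau}, \vec{x}^\tau_{n-1}) \geq 0$ yields
\[
I(x_{(n+1)\tau} : \vec{x}^\tau_{n-1} \mid x_{n\tau}) \leq I(x_{(n+1)\tau} : v_{n\tau} \mid x_{n\tau}),
\]
which together with the initial rewriting completes the proof.

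The only genuinely delicate point is justifying Markovianity of the discretized joint process with respect to the filtration generated by the full past $(\vec{x}^\tau_{n-1}, x_{n\tau}, v_{n\tau})$; since $(x_t, v_t)$ solves the autonomous SDE~\eq{3}, this follows from the strong Markov property of diffusions, and the rest of the argument is purely a manipulation of the chain rule together with the non-negativity of conditional mutual information.
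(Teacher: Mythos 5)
Your argument is correct and is essentially the paper's own proof: the paper also rewrites the left-hand side via scale invariance, bounds it by $I(x_{(n+1)\tau} : (\vec{x}^\tau_{n-1}, v_{n\tau}) \mid x_{n\tau})$, expands by the chain rule, and kills the residual term using Markovianity of $(x_{n\tau}, v_{n\tau})$. Your double application of the chain rule merely makes explicit the monotonicity step that the paper states as a one-line inequality, so the two proofs coincide.
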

\begin{proof}
  Scale invariance and decomposing the mutual information yields
  \begin{align*}
    I(S_{(n+1)\tau} :\vec{S}^\tau_{n-1}  | S_{n\tau}) =& I(x_{(n+1)\tau} :\vec{x}^\tau_{n-1}  | x_{n\tau}) \\
    \leq& I(x_{(n+1)\tau} : (\vec{x}^\tau_{n-1}, v_{n\tau})  | x_{n\tau}) \\
    =& I(x_{(n+1)\tau} : v_{n\tau}  | x_{n\tau}) + I(x_{(n+1)\tau} : \vec{x}^\tau_{n-1}  | x_{n\tau}, v_{n\tau})
  \end{align*}
  Markovianity of the process $n \ra (x_{n\tau},v_{n\tau})$ yields $I(x_{(n+1)\tau} : \vec{x}^\tau_{n-1}  | x_{n\tau}, v_{n\tau}) = 0$.
\end{proof}

\section{Inferring volatility from stock data}

\subsection{Information in the Heston Model}
As an example for our numerical studies we take realistic parameters from \cite{Ait-Sahalia2007} where Heston's model was fitted directly on the S{\&}P 500 and its VIX. They got the values
\begin{equation}
\label{eq:params}
\g = 5.07, \, \t = 0.0457, \k = 0.48, \rho = -0.767
\end{equation}
The VIX is quoted in percentage points and translates, roughly, to the expected 
movement (with the assumption of a 68{\%} likelihood, i.e., one standard deviation) in 
the S{\&}P 500 index over the next 30-day period, which is then annualized. Hence, the solution of the PDE \eq{solution} with these parameters and initial condition \eq{4} until $t = 2.0$ corresponds to two years, that is, $554$ trading days.

\begin{figure}[h] 
\includegraphics[width=\linewidth]{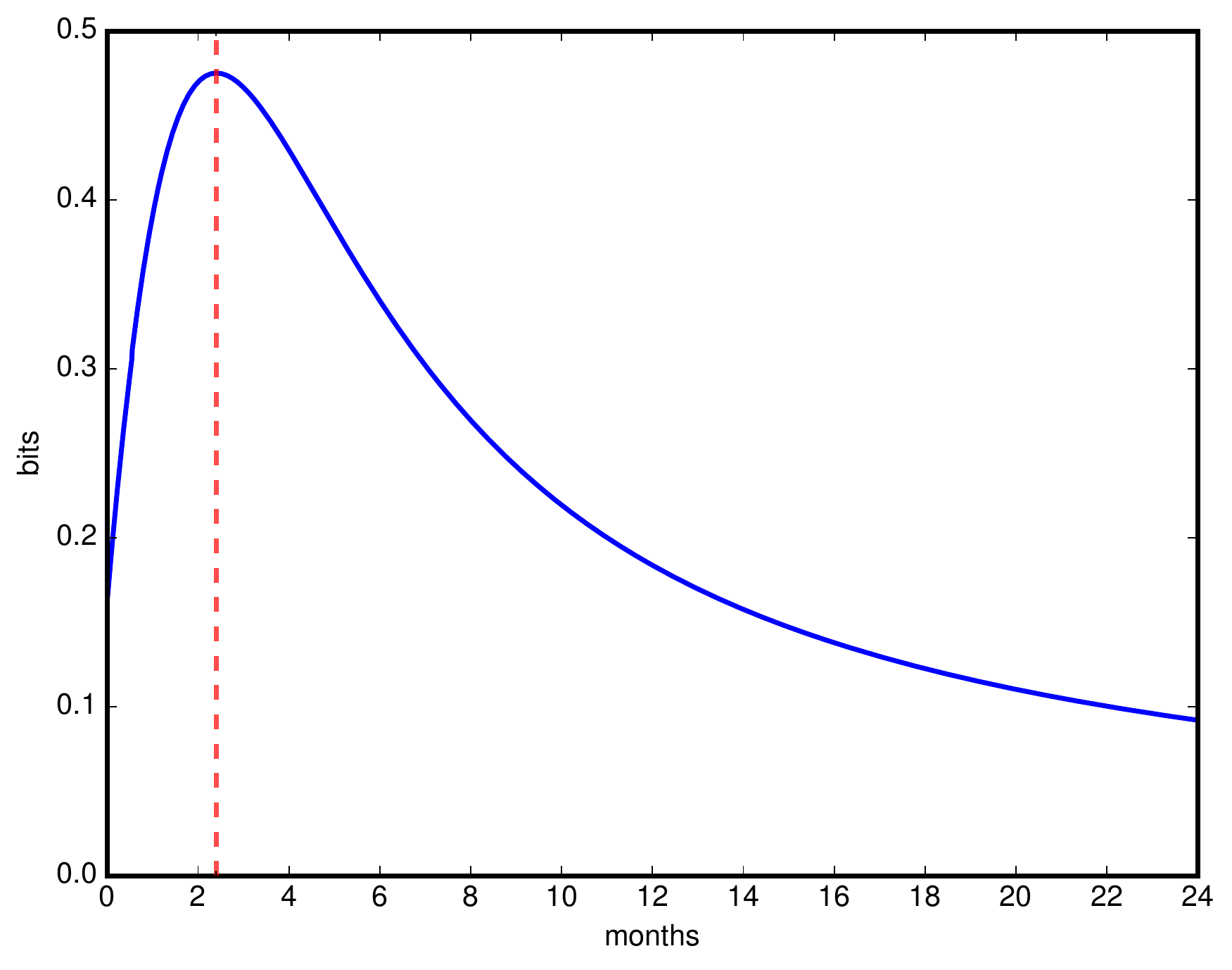}
\caption{\label{fig:fig_2} The evolution of the mutual information $I(\sg_t : S_t)$ for the joint distribution $p(x_t,v_t)$ solving the PDE \eq{1} with initial condition \eq{4}. The maximum is approximately at $t = 55$ trading days. } 
\end{figure}

Originally, we computed the mutual information $I(v_t:x_t)$ but due to lemma \ref{lem_1}
we have $I(\sg_t:S_t)= I(v_t:x_t)$. One can read off \fig{fig_2} three remarkable facts. 
First, the maximum $I(\sg_\tau:S_\tau) = 0.48$ bits at 
$\tau = 55$ trading days. Second, the mutual information diminishes for longer times $t$.
Third, and most important, the small amount of information left in the stock 
about its underlying volatility. We get at most about half a bit which makes at least $25$ independent
measurements necessary to derive $10$ bits, that is about three digits of the volatility. 
We could possibly obtain more information by observing multiple successive stock prices $S_t, S_{t-1}, \ldots$, but due to the stochasticity of the volatility process the information is still rather limited. Below, we illustrate this effect in the case of stochastic volatility models based on Gaussian processes. Third, the instantaneous jump of the mutual information $I(v_t:x_t)$ for $t \ra 0$. Since the distributions of the adjusted log-return $x_t$ and the variance $v_t$ are chosen to be independent at $t=0$, see \eq{4}, mutual information $I(x_0:v_0) = 0$. But the numerical simulation suggest the limit $\lim_{t\ra 0} I(x_t:v_t) \approx 0.166$ bits. \\
Next, we compute the informational flow.
\[
I(S_{(n+1)\tau} : (S_{n\tau}, \sg_{n\tau}) | S_{n\tau}) = I(S_{(n+1)\tau} :  \sg_{n\tau} | S_{n\tau}) 
\]
Lemma \ref{lem_2} yields
\begin{align} \label{eq:misc}
I(S_{(n+1)\tau} :  \sg_{n\tau} | S_{n\tau}) &= I(x_{(n+1)\tau} : v_{n\tau}| x_{n\tau})\nn\\
 &= h(x_{(n+1)\tau} |x_{n\tau}) - h(x_{(n+1)\tau} | x_{n\tau},v_{n\tau})\nn \\
&= h(x_{(n+1)\tau} , x_{n\tau}) - h(x_{n\tau}) - h(x_{\tau} |x_ 0, v_0) \, 
\end{align}   
where the last equality follows from $p(x_{(n+1)\tau}| x_{n\tau}, v_{n\tau}) = p(x_{\tau}|x_0,v_0)$, i.e., the temporal homogeneity of solutions of the Fokker-Planck equation \eq{1}. The conditional distribution $p(x_{\tau} |x_0, v_{0})$ is in general not equal \eqref{eq:cond_joint} as long as the condition on $x_0$ is replaced by $x_0=0$. If $x_0 = x$, then
\[
h(x_{\tau} |x_ 0, v_0) = - \int p(x_{\tau} =y| x_0=x, v_{0}) \log  p(x_{\tau}=y | x_0=x, v_{0})  \, dy
\]      
If we change the coordinates $y' = y-x$ and $x' = x$, the Jacobian of this coordinate transformation is $1$ and we obtain
\[
h(x_{\tau} |x_ 0, v_0) = - \int p(x_{\tau} =y'+x'| x_0=x', v_{0}) \log  p(x_{\tau}=y'+x' | x_0=x', v_{0})  \, dy'
\]
Since solutions $(x_t, v_t, x_0, v_0) \mapsto p(x_t, v_t|x_0, v_0)$ of the PDE \eq{1} depend only on  $(x_t-x_0, v_t, 0 ,v_0)$, we obtain $ p(x_{\tau}=y'+x' | x_0=x', v_{0} ) =  p(x_{\tau}=y' | x_0=0, v_0)$. This implies $h(x_{\tau} |x_ 0, v_0) = h(x_{\tau} |v_0)$ adopting the rule that we drop the implicit condition $x_0 = 0$. Hence, the entropy $h(x_{\tau} |x_ 0, v_0)$ is the one of the conditional density \eq{cond_joint}.\\
Computing the joint density $p(x_{((n+1)\tau}, x_{n\tau})$ yields
\begin{align*}
p(x_{(n+1)\tau} , x_{n\tau}) &= \int p(x_{(n+1)\tau} , x_{n\tau}, v_{n\tau} = v) dv \\
&= \int p(x_{(n+1)\tau} | x_{n\tau}, v_{n\tau} = v) p( x_{n\tau}, v_{n\tau} = v) dv\\
&= \int p(x_{\tau} | x_0, v_{0} = v) p( x_{n\tau}, v_{n\tau} = v) dv \, .
\end{align*}
Its entropy reads
\begin{align*}
h(x_{(n+1)\tau} , x_{n\tau}) &= - \int p(x_{(n+1)\tau}=y , x_{n\tau}=x)  \log p(x_{(n+1)\tau}=y , x_{n\tau}=x) \, dxdy \\
&= - \int \bigg(\int p(x_{\tau} =y| x_0=x, v_{0} = v)  p( x_{n\tau}=x, v_{n\tau} = v) dv \bigg)\\
 & \quad \times \log \bigg(\int p(x_{\tau}=y | x_0=x, v_{0} = v) p( x_{n\tau}=x, v_{n\tau} = v) dv  \bigg)\, dxdy \, . \\
\end{align*}
Furthermore, changing coordinates $y' = y-x$ and $x' = x$ yields
\begin{align*}
h(x_{(n+1)\tau}& , x_{n\tau}) =\\
  -&\int \bigg(\int p(x_{\tau} =y'+x'| x_0=x', v_{0} = v)  p( x_{n\tau}=x', v_{n\tau} = v) dv \bigg)\\
 & \times \log \bigg(\int p(x_{\tau}=y'+x' | x_0=x', v_{0} = v)   p( x_{n\tau}=x', v_{n\tau} = v) dv\bigg) \, dx'dy' \, . \\
\end{align*}
We substitute $ p(x_{\tau}=y'+x' | x_0=x', v_{0} = v) =  p(x_{\tau}=y' | x_0=0, v_{0} = v)$ and reduced \eq{misc} to expressions which can be computed via \eq{cond_joint} and \eq{solution}, at least numerically. We simulated the informational flow \eq{InfoFlow} for several choices of $n$.   \\
Beside the informational flow, we are also interested in the mutual information $I(S_{(n+1)\tau} : S_{n\tau}) = I(x_{(n+1) \tau}:x_{n\tau})$ between two subsequent observations of the stock. Definition of the conditional mutual information yields
\begin{align*}
I(x_{(n+1)\tau} : v_{n\tau} | x_{n\tau} ) &= I(x_{(n+1)\tau} : v_{n\tau},  x_{n\tau}) - I(x_{(n+1)\tau} : x_{n\tau}) \\
&= h(x_{(n+1)\tau} ) - h(x_{(n+1)\tau} | v_{n\tau},  x_{n\tau}) - I(x_{(n+1)\tau} : x_{n\tau})\\
&= h(x_{(n+1)\tau} ) - h(x_{\tau} | v_{0},  x_{0}) - I(x_{(n+1)\tau} : x_{n\tau})\\
&= h(x_{(n+1)\tau} ) - h(x_{\tau} | v_{0}) - I(x_{(n+1)\tau} : x_{n\tau})
\end{align*}
where the last equation follows from the previously proven identity $h(x_{\tau} | v_{0}, x_0) = h(x_{\tau} | v_{0})$ where the last entropy is the one of the conditional distribution \eq{cond_joint}. The entropy $h(x_{(n+1)\tau} ) $ can be computed via formula \eq{marginal} for the density of $x_t$. Hence,
\begin{equation} \label{eq:mutual_info_stock}
I(S_{(n+1)\tau} : S_{n\tau}) = I(x_{(n+1)\tau} : x_{n\tau}) = h(x_{(n+1)\tau} ) - h(x_{\tau} | v_{0}) - I(x_{(n+1)\tau} : v_{n\tau} | x_{n\tau} ) \,.
\end{equation}
In particular, if we set $n = 0$ we obtain $I(S_{\tau} : S_{0}) = 0$. For the other values of $n$ we plotted \fig{ratio} the ratio
\begin{equation}\label{eq:ratio}
\dfrac{I(S_{(n+1)\tau} : \sg_{n\tau} | S_{n\tau} )}{I(S_{(n+1)\tau} : S_{n\tau})}
\end{equation}
which expresses the informational gain on predictions of future stock movements $S_{(n+1)\tau}$ we obtain from knowing the instantaneous volatility compared to the information already there from reading off stock data $S_{n\tau}$ at time $n\tau$.
\begin{figure}[h] 
\includegraphics[width=\linewidth]{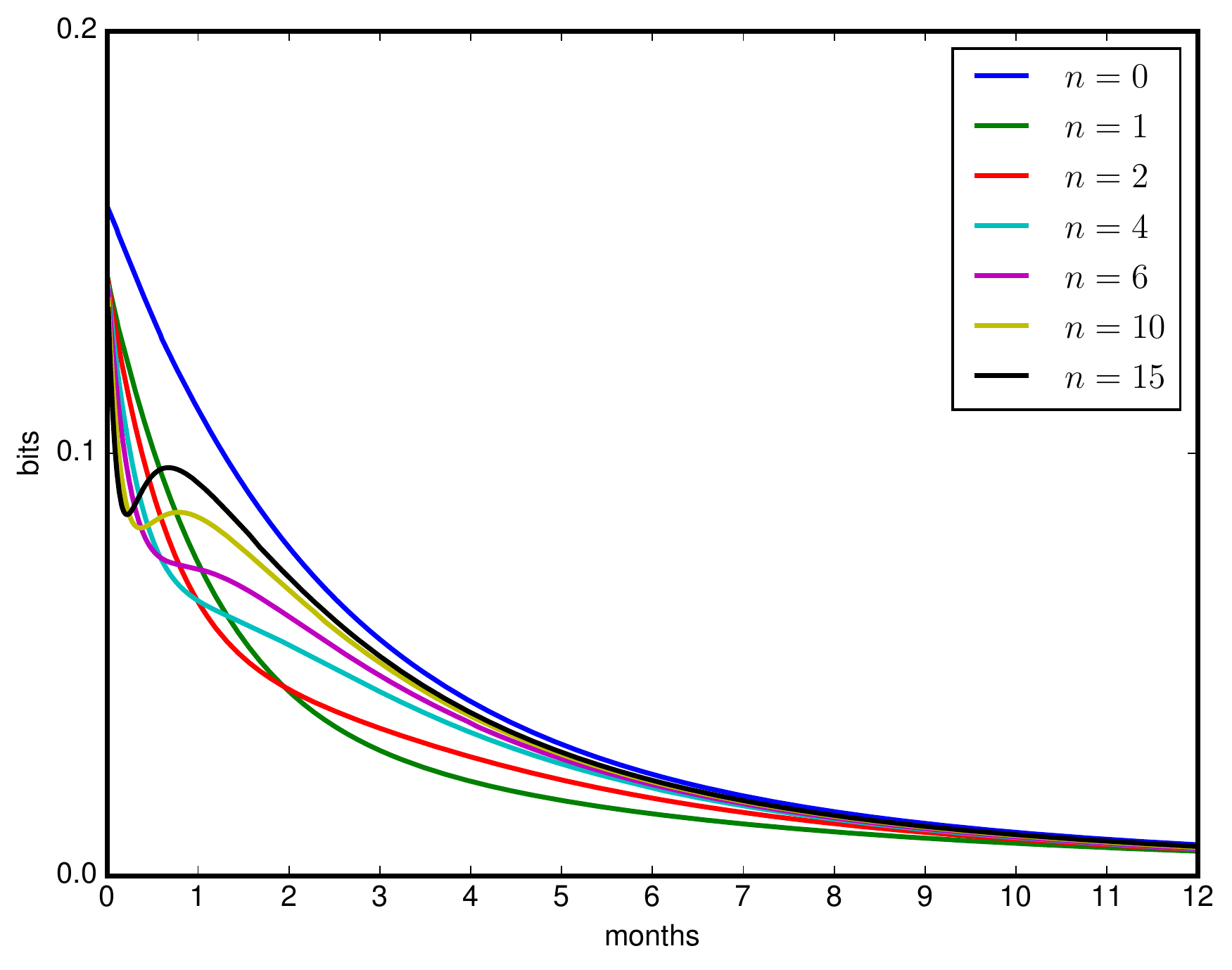}
\caption{\label{fig:info} The informational flow $I(S_{(n+1)\tau} :  \sg_{n\tau} | S_{n\tau})  $ for $n = 0,1,2,4,6,10$, and $15$. } 
\end{figure}
\begin{figure}[h] 
\includegraphics[width=\linewidth]{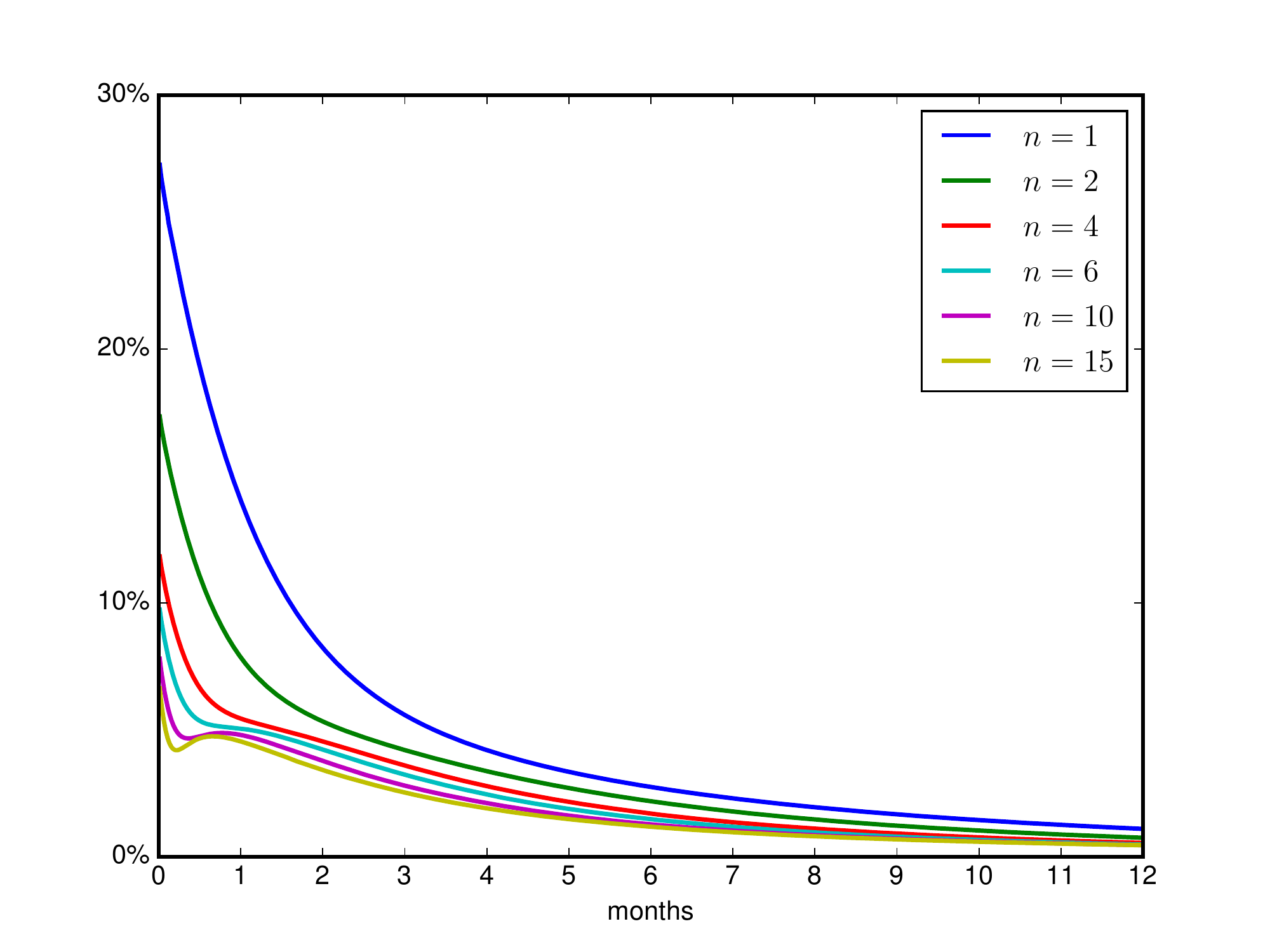}
\caption{\label{fig:ratio} The ratio $I(S_{(n+1)\tau} :  \sg_{n\tau} | S_{n\tau})  / I(S_{(n+1)\tau} :  S_{n\tau}) $ for $n = 1,2,4,6,10$, and $15$. } 
\end{figure}

Several facts can be read off \fig{info} and \fig{ratio}. First, the informational flow $I(S_\tau: \sg_0|S_0) = I(x_\tau : v_0)$ is an upper bound of the general flow $I(S_{(n+1)\tau} :  \sg_{n\tau} | S_{n\tau})  = I(x_{(n+1)\tau} : v_{n\tau} | x_{n\tau}) $ for all $n \in \NH_0$. Second, the incredible small amount of this flow: the knowledge of the joint state $(S_{n\tau}, \sg_{n\tau}) $ will not remarkably improve predictions on the stock $S_{(n+1)\tau}$ at time $\tau$ if $S_{n\tau}$ is already known. The informational gain $I(S_{(n+1)\tau} :  \sg_{n\tau} | S_{n\tau})$ compared to the information $I(S_{(n+1)\tau} :  S_{n\tau}) $ we already have about $S_{(n+1)\tau} $ if we read off $S_{n\tau}$ is at most $27\%$ and vanishes if $\tau$ or $n$ increases. Diminishing informational gain on future returns from knowing the instantaneous volatility $v_{n\tau}$ if $n$ grows can be directly derived from \eq{mutual_info_stock}: we already know from \fig{info} that $I(S_{(n+1)\tau} :  \sg_{n\tau} | S_{n\tau})$ is upper bounded by $I(S_{\tau} :  \sg_{0} | S_{0}) = h(x_\tau)-h(x_\tau|v_0)$ independent of $n$ whereas $h(x_{(n+1)\tau})$ is extensive in $n$ because the marginal distribution of $x_{(n+1)\tau}$ is roughly Gaussian with a variance proportional to $(n+1)\tau$. Hence, $h(x_{(n+1)\tau}) \approx \mathcal{O}\left(\log \left( (n+1)\tau \right)\right)$. The same argument yields $h(x_{\tau}|v_0) \approx \mathcal{O}(\log \left(\tau)\right)$ and therefore
\[
I(S_{(n+1)\tau} : S_{n\tau}) = I(x_{(n+1)\tau} : x_{n\tau}) \approx \log \left( (n+1)\tau\right) - \log \tau = \log(n+1)\, ,
\]
that is, the denominator in \eq{ratio} grows logarithmically in $n$. Overall, knowing the instantaneous volatility $\sg_{n\tau}$ does not deliver considerably better estimations on future stock-returns than the knowledge of the stationary distribution $\pi_\ast(\sg_n)$ of the volatility does, which we know already from the model. Therefore, inferring the realized volatility from market data in financial markets subject to Heston's model is not only hard, but also of limited value when predicting prices. Third, from theorem \ref{lem_3} follows that the stock process $S_{n \tau} \ra S_{(n+1)\tau}$ is nearly Markovian for increasing $\tau$. Last, also the informational flow $I(S_{(n+1)\tau} :  \sg_{n\tau} | S_{n\tau}) $ makes an instantaneous jump for $\tau \ra 0$ with the same jump height for $n=0$ as the mutual information $I(x_\tau: v_\tau)$ in \fig{fig_2}.\\
Finally, we investigate in this subsection the influence of the parameters $\rho$ on the mutual information $I(\sg_t, S_t)$. Recall, $\rho$ is the instantaneous correlation between the two Brownian motions $W^{(1)}_t$ and $W^{(2)}_t$ driving the processes
$S_t$ and $v_t$, respectively. $\rho$ is negative to capture the leverage effect \cite{Bouchaud2001}, i.e., negative returns tend to increase volatility. The effect on the joint probability is nicely captured by figure \fig{fig_3}. The joint distribution of the first figure is skewed towards negative returns if volatility increases whereas the second presents a nearly symmetric distribution, i.e., negative returns affect volatility not more than positive ones.

\begin{figure}
  \begin{tabular}{cc}
 Leverage Effect Included & Leverage Effect Excluded \\
    \includegraphics[width=0.48\linewidth]{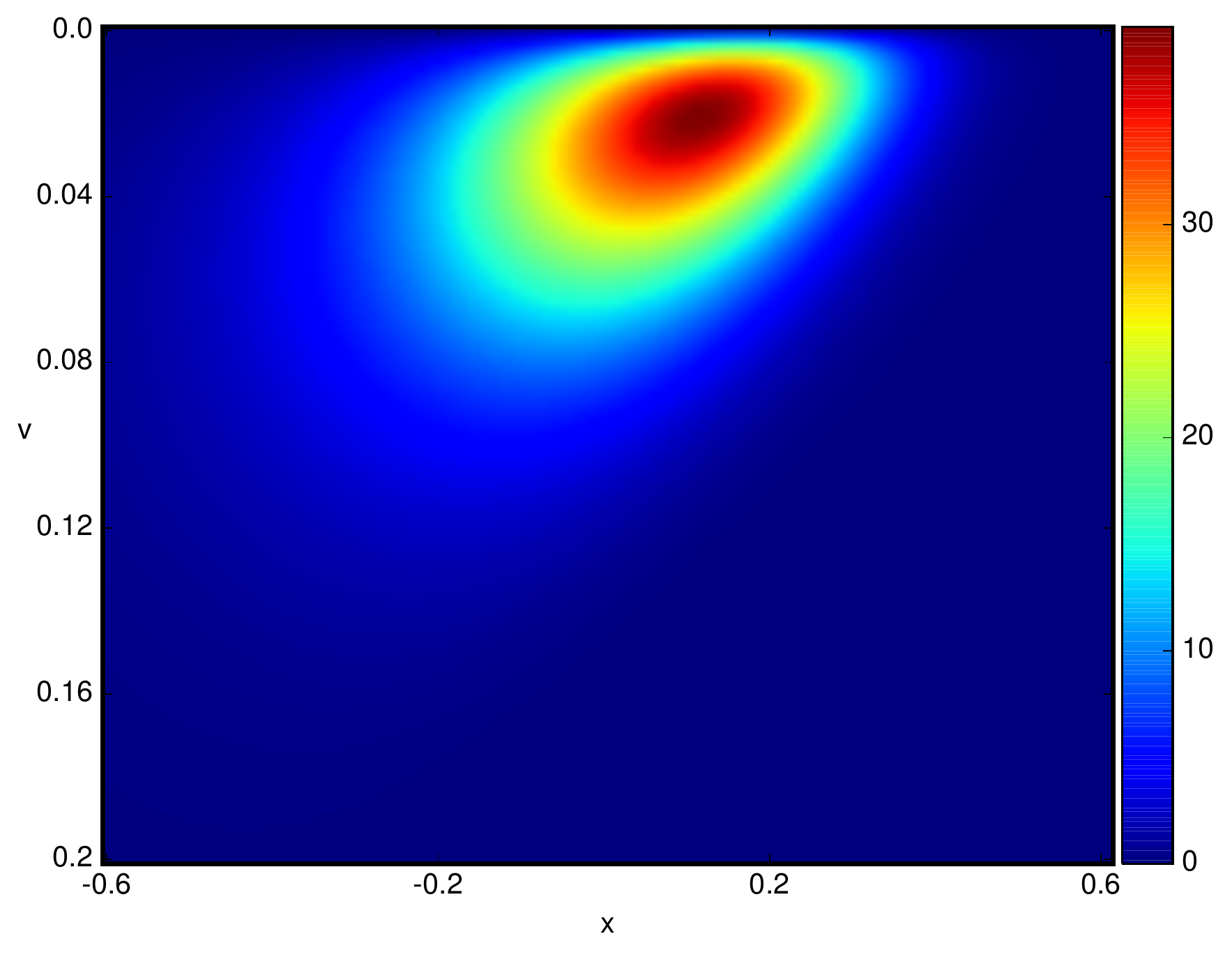} &
    \includegraphics[width=0.48\linewidth]{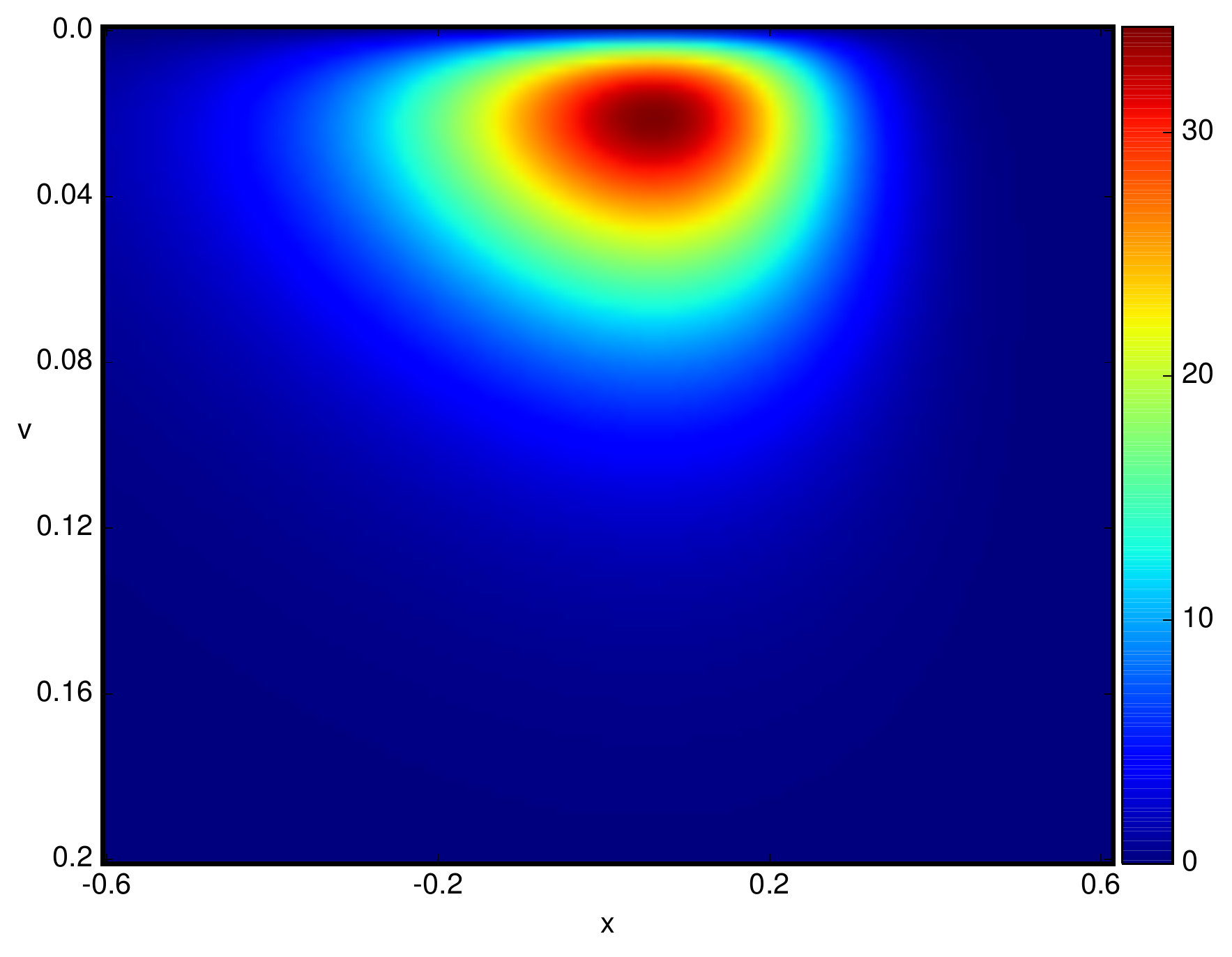}
  \end{tabular}
  \caption{\label{fig:fig_3}The joint distribution $p(x_\tau, v_\tau)$ of
  the Heston process with parameters as in \eq{params} for $\tau =
  252$ trading days. If $\rho = -0.767$, the inclination shows the negative correlation
  between adjusted log-returns $x$ and variance $v$. If $\rho=0.0$ the distribution is nearly symmetric in the adjusted
  log returns $x$.}
\end{figure}

Figure \fig{fig_5} shows the mutual information $I(\sg_t:S_t)$ for different choices of the instantaneous
correlation $\rho$. Two facts are remarkable. The growth of the information w.r.t. to $\rho$ and the negligible amount left if we choose $\rho=0.0$. In this case the stock and the volatility process are quasi independent and nearly no information about the instantaneous volatility can be obtained from stock data.

\begin{figure}[h] 
\includegraphics[width=\linewidth]{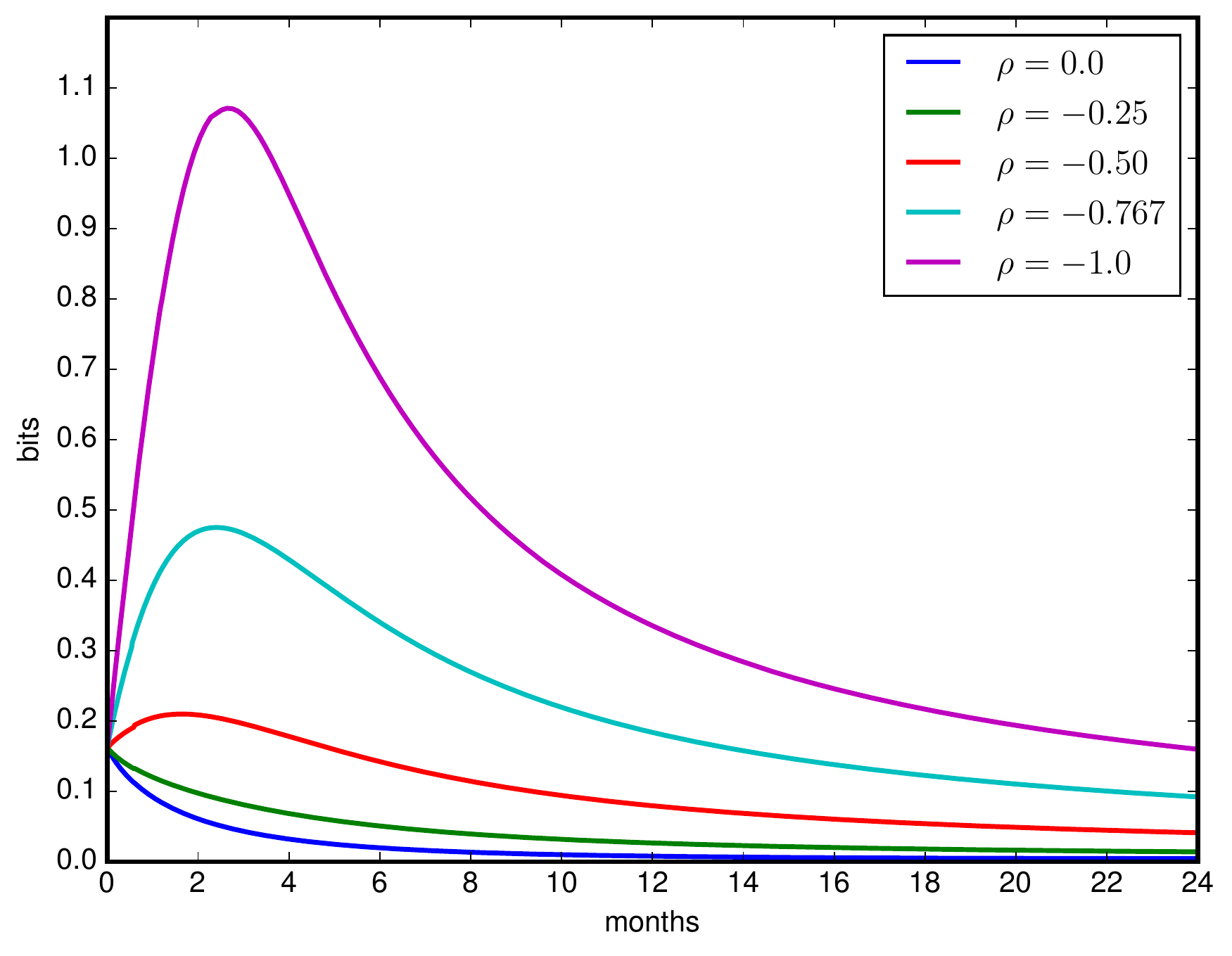}
\caption{\label{fig:fig_5} The evolution of the mutual information $I(\sg_t : S_t)$ for
different values of the instantaneous correlation $\rho$. } 
\end{figure}

\subsection{Fitting Stochastic Volatility Models}

The mutual information $I(S_t : \sg_t)$ quantifies the average
dependence between the stock price and instantaneous volatility which
we have evaluated for the Heston model above. When inferring
volatility from actual stock price data we are usually less interested
in the average behaviour over possible realizations of the price
process $t \ra S_t$. Rather, we want to estimate the volatility
conditioned on the actual sequence of prices $\vec{S} = (S_{t_1},
\ldots, S_{t_n})$ that were observed at times $t_1 < \cdots < t_n$.
Unfortunately, computing the full conditional distribution $p(\sg_t
\mid \vec{S})$ or even $p(\{\sg_t\}_{t>0} : \vec{S})$ is infeasible in the
Heston model.  Thus, in this section we turn to a different class of models which is based on
Gaussian processes. While exact inference is also infeasible in these
models, Gaussian processes are heavily used in machine learning and a wide
range of approximation algorithms is readily available. As described
above, we consider generalizations of the exponential Ornstein-Uhlenbeck
model. As before, the price process is given by
\[ 
dS_t= \mu S_t dt + \sg_t S_t dW_t^{(1)},
\]
but now, the logarithm of the variance $y_t = \log \sg^2_t$ is drawn
from a Gaussian process. To fit such a model, we can represent the
distribution of $\vec{y} = (y_{t_1}, \ldots, y_{t_n})$ exactly by a
multivariate Gaussian distribution. In addition, we need the
likelihood of the observed prices $\vec{S}$ given $\vec{y}$, i.e., $p(\vec{S} \mid \vec{y})$. Here,
for simplicity, we use an Euler approximation to the stochastic
differential equation for $S_t$ \cite{Phillips2009}:
\[ S_{t+\Delta t} - S_t = \mu S_t \Delta t + \sigma_t S_t \sqrt{\Delta
  t} \epsilon_t \] where $\epsilon_t \sim \mathcal{N}(0,1)$.

Thus, we assume that the observed daily return $r_t = \frac{S_{t+1} -
  S_{t}}{S_t}$ is drawn from a normal distribution with standard
deviation $\sigma_t$, i.e., $r_t \sim \mathcal{N}(\mu,
\sigma_t)$. Since the drift of the price process is usually rather
small compared to its volatility, we will assume $\mu = 0$ for
simplicity in the following.

In this formulation, our stochastic volatility model is a Gaussian
process $y_t$ with a non-Gaussian likelihood model $p(r_t |
y_t)$. Similar models are well known and widely used in Gaussian
process classification (see chapter 3 of \cite{Rasmussen2006}). In this context, several algorithms
to approximate the posterior $p(\vec{y} | \vec{r})$ have been developed. Here, we employ the Laplace
approximation where the posterior is approximated by a multivariate
Gaussian centred at the mode of the posterior, i.e.,
$\vec{y}_{\mathrm{Laplace}} = \argmax_{\vec{y}}p(\vec{y} \mid
\vec{r}) = \argmax_{\vec{y}} \Psi(\vec{y})$ where $\Psi(\vec{y}) = \log p(\vec{r}|\vec{y}) p(\vec{y})$. The covariance for the approximate posterior is derived
from the local curvature around the mode,
i.e., $\Sigma_{\mathrm{Laplace}}^{-1} = - \nabla \nabla \log
p(\vec{y}_{\mathrm{Laplace}} \mid \vec{r} )$. 

Using the Laplace approximation it is also possible to compute an
approximation to the marginal likelihood
\begin{eqnarray*} 
  p(\vec{r}) & = & \int p(\vec{r} \mid \vec{y}) p(\vec{y}) d\vec{y} \\
  & \approx & e^{\Psi(\vec{y}_{\mathrm{Laplace}})} \int e^{-\frac{1}{2} (\vec{y} - \vec{y}_{\mathrm{Laplace}})^T \Sigma_{\mathrm{Laplace}}^{-1} (\vec{y} - \vec{y}_{\mathrm{Laplace}})} d\vec{y} \\
  & = & p(\vec{r}|\vec{y}_{\mathrm{Laplace}}) p(\vec{y}_{\mathrm{Laplace}}) \int e^{-\frac{1}{2} (\vec{y} - \vec{y}_{\mathrm{Laplace}})^T \Sigma_{\mathrm{Laplace}}^{-1} (\vec{y} - \vec{y}_{\mathrm{Laplace}})} d\vec{y}
\end{eqnarray*}
In contrast to a maximum likelihood solution, $p(\vec{r})$ does not just depend on the goodness of fit, as captured by the likelihood $p(\vec{r}|\vec{y}_{\mathrm{Laplace}})$. It also takes into account the prior probability $p(\vec{y}_{\mathrm{Laplace}})$ as well as the uncertainty about the inferred volatility process $\vec{y}$. Thus, the marginal likelihood, automatically
incorporates a trade-off between model fit and model complexity (see chapter 28 of \cite{MacKay:itp} for details). For
this reason, in Bayesian statistics, the marginal likelihood is often
used in model selection.

Here, we use the marginal likelihood to compare different structural
assumptions about the underlying volatility process. In the standard
exponential stochastic volatility model $y_t$ is modelled as an
Ornstein-Uhlenbeck process with covariance function $k^{OU}_{t,t'} =
\frac{\beta^2}{2 \alpha} e^{- \alpha |t - t'|}$. Here,
$\frac{1}{\alpha}$ can be considered as a lengthscale as it gives the
correlation time of the process. The Ornstein-Uhlenbeck process has
continuous, but nowhere differentiable sample paths. In machine
learning rather smooth Gaussian processes with differentiable sample
paths are usually preferred. Table~\ref{tab:GP} gives some examples of
popular Gaussian processes and their covariance functions.

More complex processes, combining the properties of different
processes, can then for example be obtained by adding the corresponding covariance
functions. In all our experiments below, we have added a bias
covariance function to increase the probability to draw sample paths
which are shifted away from zero. This allows our models to easily
represent processes $y_t$ with a non-zero mean. Overall, we fit and
compare processes based on the following covariance structures:
\begin{description}
\item[OU] Bias $+$ Ornstein-Uhlenbeck
\item[RBF] Bias $+$ Squared exponential
\item[RatQuad] Bias $+$ Rational quadratic
\item[RBF\_RBF] Bias $+$ Squared exponential $+$ Squared exponential
\item[OU\_OU] Bias $+$ Ornstein-Uhlenbeck $+$ Ornstein-Uhlenbeck
\end{description}
The model OU is just the standard exponential Ornstein-Uhlenbeck model
\cite{Masoliver2005}, while OU\_OU denotes a two-factor version of
this model incorporating two independent volatility factors with
different time scales. Similarly, the RBF\_RBF model is a two-factor
extension of the model RBF.

To fit the different models we have then optimized the kernel
parameters, e.g. the lengthscale of the Ornstein-Uhlenbeck process, in
order to maximize the marginal likelihood of the observed
returns. Optimizations have been caried out using the Gaussian process
toolbox GPy \cite{gpy2014} with gradient descent using the
Broyden--Fletcher--Goldfarb--Shanno (BFGS) algorithm from 100
different initial conditions.  Using a range of different stock market
data (all downloaded from Yahoo finance) we obtain results as shown in Table~\ref{tab:marginal}.

Our data sets cover very different time periods as well as sectors and
no consistently best model can be indentified.  Nevertheless, we see a
preference for models including two different time scales as either
RBF\_RBF or OU\_OU performs best, closely followed by the RatQuad
kernel. The later model exhibits a power-law decay of volatility
correlations which is considered as a stylized fact of stock market
data. It remains to be investigated why we cannot conform this
finding. Nevertheless, also the best fitting two-factor models have a
long time scale implying that volatility is correlated over several
weeks. In this regard our findings agree with previous results of
econometric studies which strongly suggest the use of two-factor
models which allow ``breaking the link between tail thickness and
volatility persistence'' \cite{Tauchen2003} as needed to fit empirical
data.

As an example, \fig{FitVol} shows the hidden volatility implied by the
Ornstein-Uhlenbeck based (OU) as well as the best model (RBF\_RBF) on the S\&P 500 data. The shaded region shows
the $95$\% region ($\pm 2$ standard deviations from the mean) of the posterior and demonstrates the large
uncertainty that remains about the volatility. When predicting the
volatility for the next 100 days (red curve), the uncertainty is even
more pronounced and increases quickly towards the a-priori uncertainty of the stationary volatility distribution. Note that this uncertainty is not due to parameter uncertainty which have been fixed after fitting. The situation is the same as in the Heston model were we also assumed the parameters known and the high uncertainty is due to the stochasticity of the underlying volatility process. It thus appears to be an intrinsic property of stochastic volatility models, especially if the volatility process varies quickly as in the case of realistic parameters.

\begin{figure}[h]
  \includegraphics[width=.8\linewidth]{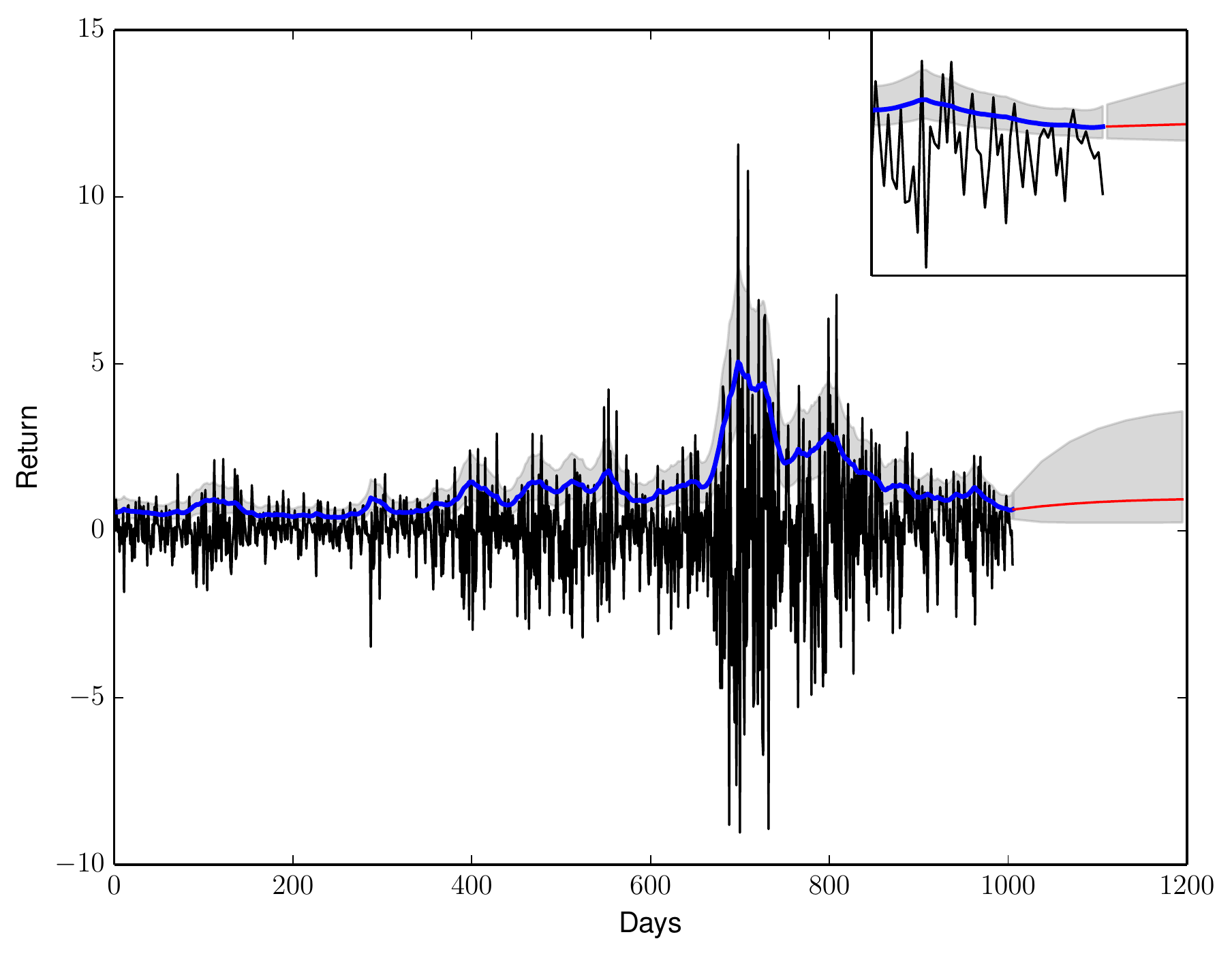} \\
  \includegraphics[width=.8\linewidth]{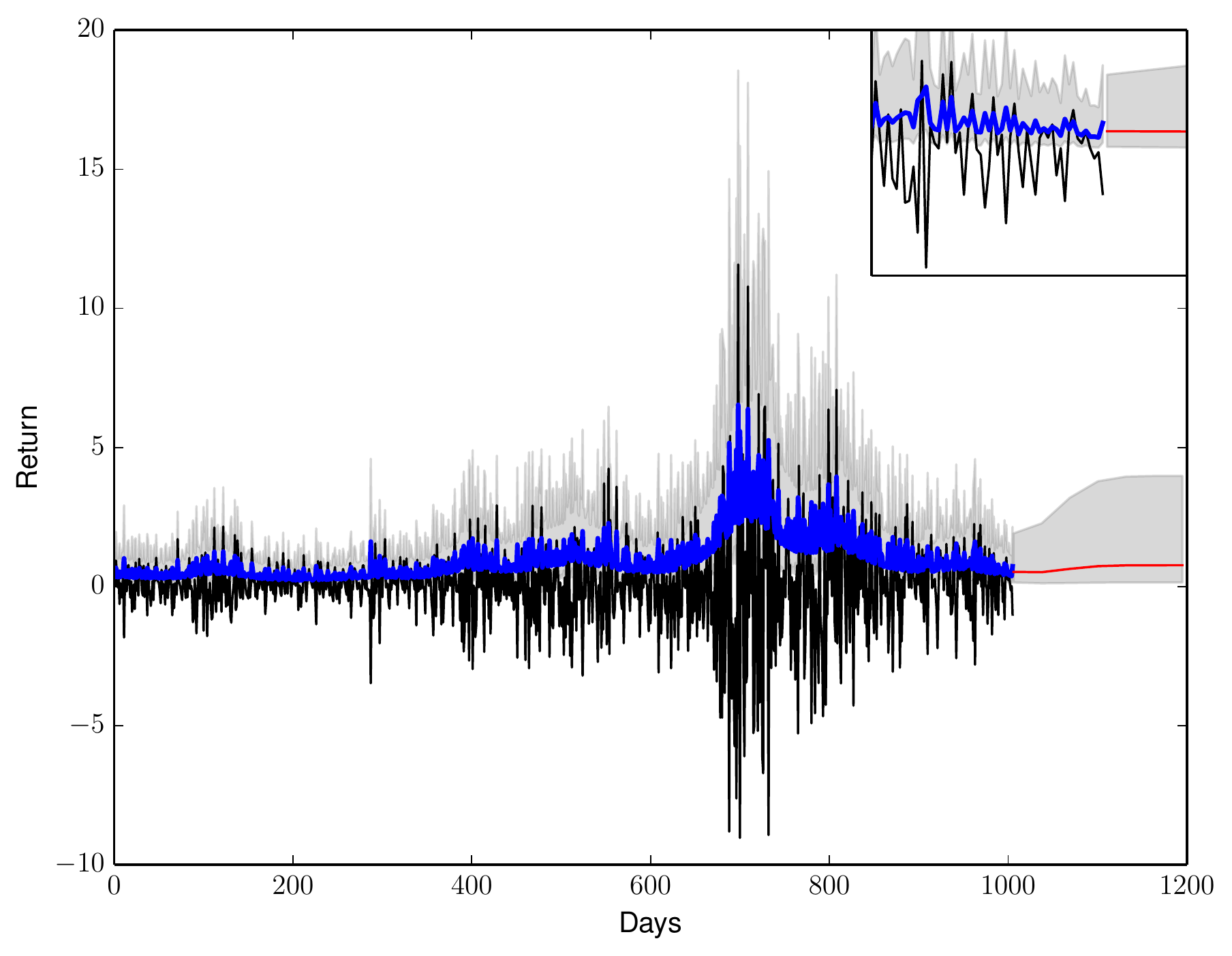}
  \caption{\label{fig:FitVol} Exponential stochastic volatility model
    with OU (top) and RBF\_RBF (bottom) covariance function fitted to
    the S\&P 500 data. The daily returns are shown in black with the
    infered volatility in blue. The grey region shows the large
    uncertainty ($95$\% credibility region) which increases quickly
    when predicting into the future (red curve). The inset shows a
    zoom on the last 50 data points.}
\end{figure}

\subsubsection{Information gain}
As before, we can use information theory to quantify the amount of
information obtained about the volatility. While the mutual
information quantifies the average uncertainty, here we are interested
in the information about $\vec{\sg}$ that was obtained from the
actually observed prices. In \cite{DeWeese1999} it is suggested to
consider the difference in entropy between the prior $p(\vec{\sg})$
and the posterior $p(\vec{\sg}|\vec{S})$ as the information gain from
the actual prices $\vec{S}$. Thus, instead of averaging over all
possible price realizations as in the conditional entropy
$h(\vec{\sg}|\vec{S})$, one computes the entropy of the posterior
distribution conditioned on the actually observed prices, i.e. $- \int
p(\vec{\sg}|\vec{S}) \log p(\vec{\sg}|\vec{S}) d\vec{\sg}$.
In our case, these entropies can be evaluated since the posterior
distribution $p(\vec{y} | \vec{S})$ is approximated by a Gaussian and
the prior $p(\vec{y})$ was assumed to be a Gaussian in the first
place. For a $d$-dimensional multivariate Gaussian with covariance
matrix $\Sigma$ the differential entropy is given by $\frac{1}{2} \log
(2 \pi e)^d |\det \Sigma|$. To have a fair comparison with the results
obtained for the Heston model, we have taken care that the uncertainty
about the mean level of the volatility process did not contribute to
the information gain. Thus, as in the Heston model, we assumed that
the parameter specifying the stationary mean of the volatility process
is known. Formally, we accomplish this by removing the bias component
of the covariance function before computing the information gain. Note
that the entropy of a Gaussian distribution does not depend on the
mean of the distribution, and thus adjusting the mean of the prior
distribution to its stationary value is not necessary when computing
the information gain.

\begin{figure}[h]
  \begin{tabular}{cc}
    OU model & RBF\_RBF model \\
    \includegraphics[width=0.45\linewidth]{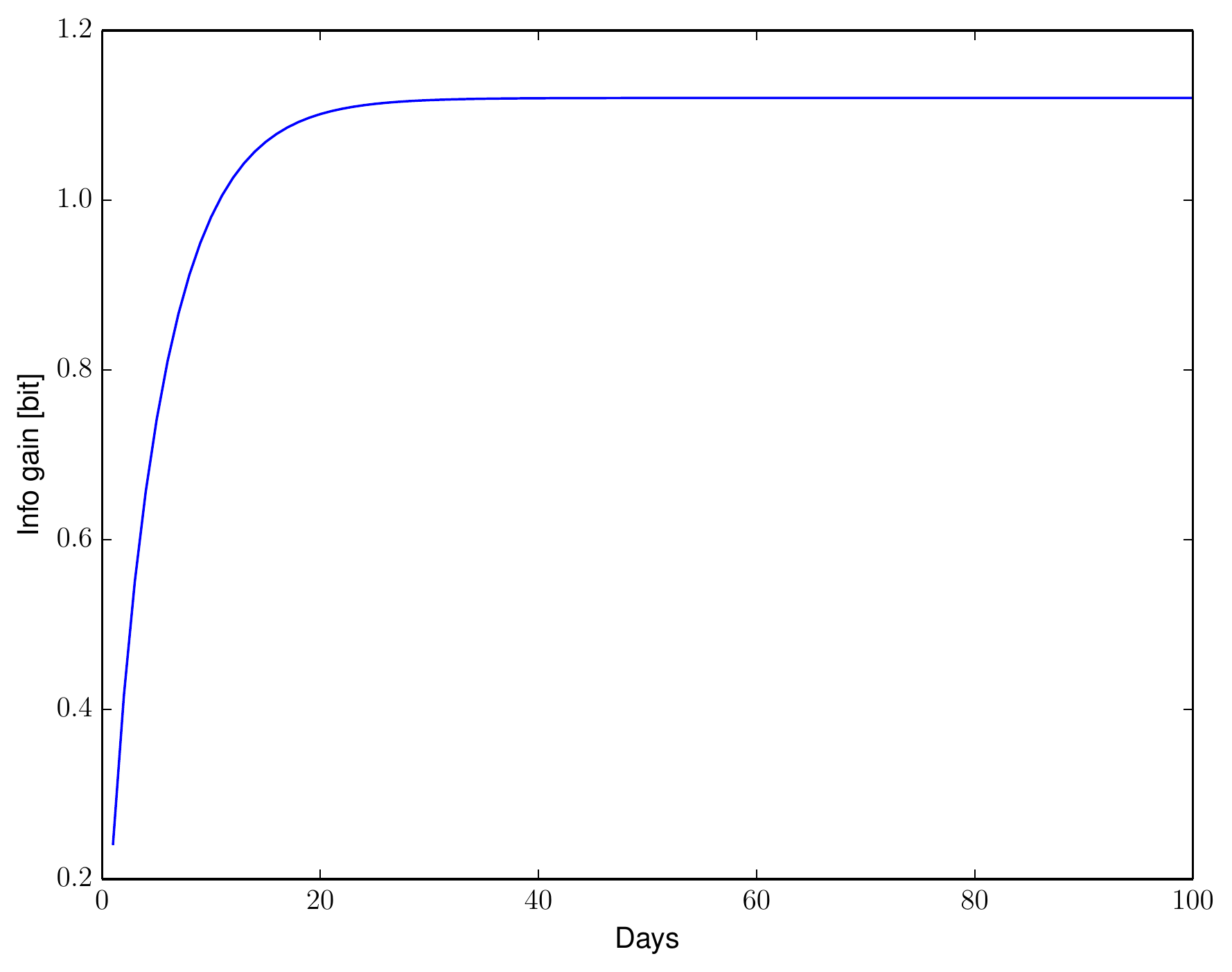} &
    \includegraphics[width=0.45\linewidth]{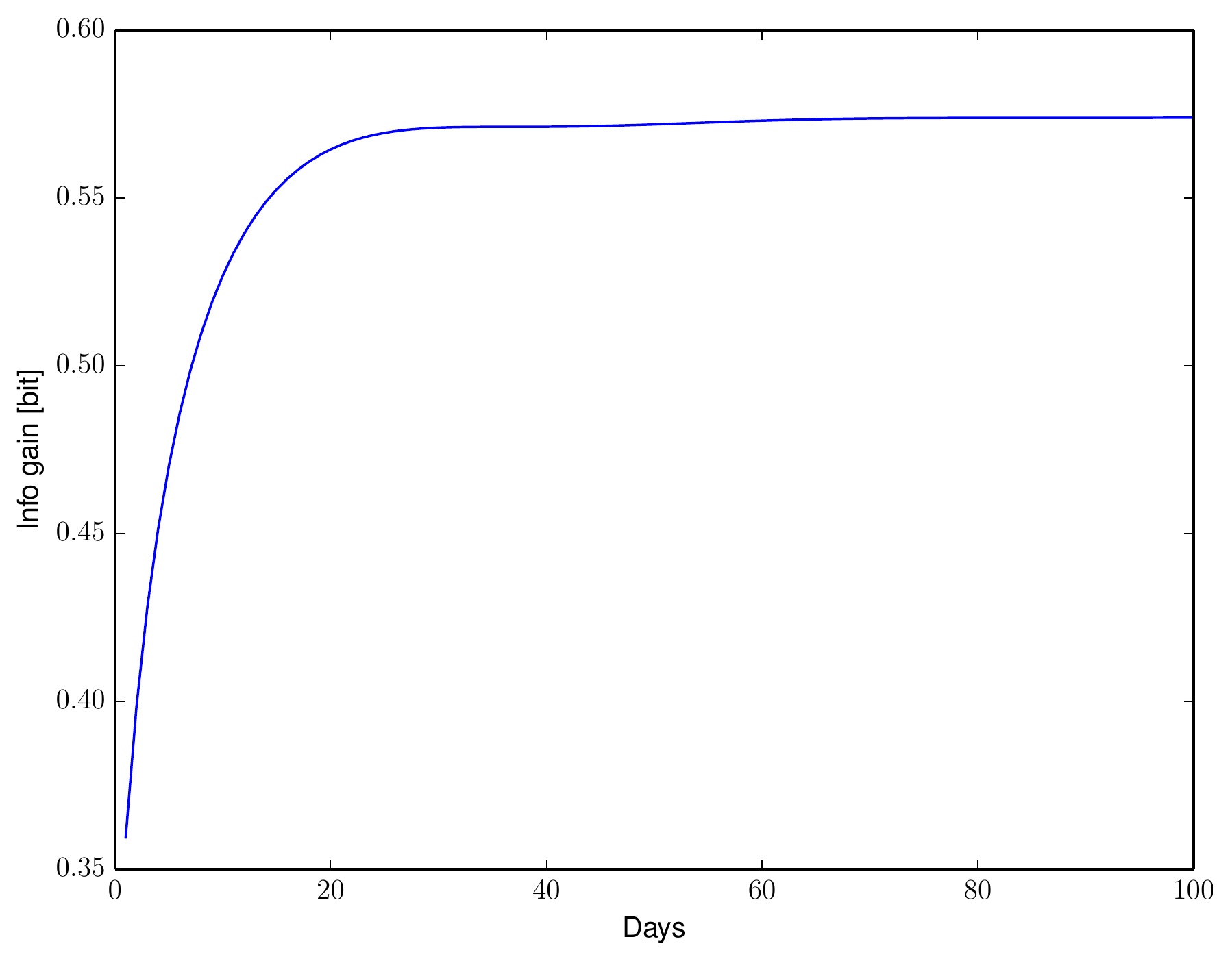}
  \end{tabular}
  \caption{\label{fig:InfoGain} Information gain about the
    instantaneous volatility $\sg_t$ ($t = 1006$) based on
    successively longer sequences of preceding daily returns
    $r_{t-\tau+1}, \ldots, r_t$. In both models the information gain
    saturates after about 20 days.}
\end{figure}
In the example shown in \fig{FitVol} fitted on $4$ years of data
($1006$ trading days) in total $64.5$ (OU) and $251$ bits (RBF\_RBF)
are obtained about the volatility $\vec{\sg}$.  The large difference
reflects the fact that the OU model is less flexible and thus its
a-priori uncertainty is already much smaller. Further, the results are
consistent with the low information gain found for the Heston model as
the above numbers correspond to just $0.06$ (OU) respectively $0.25$
(RBF\_RBF) bits per observation.  While this is comparable to the low
values in the Heston model, especially when assuming $\rho = 0$, the
information about each single observation can well be larger due to
the autocorrelation of the volatility process. To illustrate this
effect, \fig{InfoGain} shows the information gained about the
instantaneous volatility $\sg_{1006}$, corresponding to the last
trading day in our data set, when observing the daily returns
preceding it. One clearly sees that observing successive returns does
reduce the uncertainty about the volatility, but after just about 20
days the information gain saturates. Thus, even observing years of
data would not improve volatility estimates and just about 1 bit of
information can be obtained. Again, we see that inferring volatility
is notoriously hard in stochastic volatility models and volatility
forecasts will be rather imprecise for quite fundamental information
theoretic reasons.


\begin{center}
\begin{table}[h]
  \begin{tabular}{l|ll}
    Name & Covariance & Properties \\ \hline
    Bias & 1 & constant sample paths \\
    Squared Exponential & $e^{- \frac{(t-t')^2}{l^2}}$ & infinitely differentiable sample paths\\
    Rational Quadratic & $\left( 1 + \frac{(t - t')^2}{2 \alpha l^2} \right)^{-\alpha}$ & infinite mixture of lengthscales
  \end{tabular}
  \caption{\label{tab:GP} Gaussian processes that are popular in machine learning.}
\end{table}

\begin{table}[h]
  \begin{tabular}{llll}
    Description & Date & Model & log. marginal likelihood \\ \hline
    Apple Inc. & {\numdate \daterange{2000-01-01}{2003-01-01}} & RBF\_RBF & 1423.22 \\
    & & RatQuad & 1421.33 \\
    & & OU\_OU & 1412.74 \\
    & & OU & 1412.23 \\
    & & RBF & 1403.32 \\
    DAX & {\numdate \daterange{1992-01-01}{1995-01-01}} & OU\_OU & 2509.23 \\
    & & RBF\_RBF & 2505.98 \\
    & & RatQuad & 2498.64 \\
    & & RBF & 2493.47 \\
    & & OU & 2486.51 \\
    Exxon Mobile & {\numdate \daterange{1986-01-01}{1990-01-01}} & RBF\_RBF & 2923.20 \\
    & & OU\_OU & 2921.43 \\
    & & RatQuad & 2906.43 \\
    & & OU & 2897.08 \\
    & & RBF & 2881.11 \\
    S\&P 500 & {\numdate \daterange{2006-01-01}{2010-01-01}} & RBF\_RBF & 3129.78 \\
    & & OU\_OU & 3129.58 \\
    & & RatQuad & 3093.42 \\
    & & OU & 3083.78 \\
    & & RBF & 3072.75
  \end{tabular}
  \caption{\label{tab:marginal} Results of model comparison on
    different stock market data sets.}
\end{table}
\end{center}

\section{Conclusions}

In this paper, we have analysed stochastic volatility models from an
information theoretic perspective. In particular, we have asked how
much information about the hidden volatility can be obtained from
observing stock prices. First of all, the numerical results are rather surprising, at least for the
authors. We have had some doubts on the information content of stocks
about the volatility which is driving them in the Heston model and
related stochastic volatility models, but we did not expect such a
small residual value of at most $0.5$ bit for realistic parameters
taken from the S{\&}P 500.  Second, mutual information
$I(\sg_t, S_t)$ seems to vanish in the long run or becomes at least
meagre. Hence, the processes appear to be almost independent even
though the adjusted log-return process $t \ra x_t$ is entirely driven
by its variance $v_t$, see \eq{log_return}. Third, it is not only difficult
to infer the instantaneous volatility from stock data, it is mostly useless: \fig{info} and \fig{ratio}
imply that knowledge of the instantaneous volatility $\sg_{n\tau}$ just slightly improves
the prediction of $S_{(n+1)\tau}$ given $S_{n\tau}$. Regarding our results in
the third section, the stock process $S_n \ra S_{(n+1)\tau}$ is almost a Markovian
process in its own right. Fourth, the strong dependence of the mutual 
information on the instantaneous correlation $\rho$ is also surprising: 
again the stock and volatility processes are nearly independent if there is 
no correlation. \\
We find a similar behaviour in related stochastic volatility models
which we fitted to actual stock price data. Again, we observe a
large uncertainty when volatility is infered from market prices. This
has severe implications for volatility forecasting and sheds doubt on
the standard practice of comparing models based on their forecasting
error \cite{Hansen2005}. In this case, due to uncertainty about the model parameters,
forecasting should be even less precise than within a model that is
perfectly specified.  A possible remedy could be to incorporate option
prices as well. Volatility is commonly inferred from options not from
stocks.  If Heston's model holds true, what is the information between
the price of an option and the volatility of the underlying
stock. Furthermore, how does this value change if one increases the
number of options considered? While we do not yet know the answer to
these questions, it might well be that substantial uncertanties remain
and volatility estimates, at least from stochastic volatility models,
should be considered as unreliable for quite fundamental information
theoretic reasons.

\section{Funding}

This work was supported by the
European Research Council under the European Union's Seventh Framework
Programme (FP7/2007-2013)/ERC grant agreement no. 318723.  Nils
Bertschinger thanks Dr. h.c. Maucher for funding his position.

\bibliographystyle{abbrv}
\bibliography{Heston}

\begin{thebibliography}{10}

\bibitem{Cheng1994}
P.~S. A.~Cheng.
\newblock Approximate inversion of the laplace transform.
\newblock {\em Mathematica Journal}, 4(2), 1994.

\bibitem{Ait-Sahalia2007}
Y.~A\"{\i}t-Sahalia and R.~Kimmel.
\newblock {Maximum likelihood estimation of stochastic volatility models}.
\newblock {\em Journal of Financial Economics}, 83(2):413--452, 2007.

\bibitem{gpy2014}
T.~G. authors.
\newblock {GPy}: A gaussian process framework in python.
\newblock \url{http://github.com/SheffieldML/GPy}, 2012--2014.

\bibitem{Black1973}
F.~Black and M.~Scholes.
\newblock {The Pricing of Options and Corporate Liabilities}.
\newblock {\em Journal of Political Economy}, 81(3):637, 1973.

\bibitem{Bollerslev2006}
T.~Bollerslev, J.~Litvinova, and G.~Tauchen.
\newblock {Leverage and volatility feedback effects in high-frequency data}.
\newblock {\em Journal of Financial Econometrics}, 4(3):353--384, 2006.

\bibitem{Bouchaud2001}
J.~P. Bouchaud, a.~Matacz, and M.~Potters.
\newblock {Leverage effect in financial markets: the retarded volatility
  model.}
\newblock {\em Physical review letters}, 87(22):228701, 2001.

\bibitem{Bouchaud2003}
J.-P. Bouchaud and M.~Potters.
\newblock {Theory of Financial Risk and Derivative Pricing}.
\newblock {\em Theory of Financial Risk and Derivative Pricing}, page 379,
  2003.

\bibitem{Brenner1989}
M.~Brenner and D.~Galai.
\newblock {New financial instruments for hedging changes in volatility}.
\newblock {\em Financial Analysts Journal}, 45(4):61--65, 1989.

\bibitem{Brenner1993}
M.~Brenner and D.~Galai.
\newblock {Hedging Volatility in Foreign Currencies}.
\newblock {\em The Journal of Derivatives}, 1(1):53--59, 1993.

\bibitem{Tauchen2003}
M.~Chernov, A.~R. Gallant, E.~Ghysels, and G.~Tauchen.
\newblock Alternative models of stock price dynamics.
\newblock {\em Journal of Econometrics}, 116:225--257, 2003.

\bibitem{Cover2006}
T.~M. Cover and J.~A. Thomas.
\newblock {\em Elements of Information Theory}.
\newblock Wiley-Interscience, 2 edition, July 2006.

\bibitem{D.DuffieJ.Pan2000}
K.~S. D.~Duffie, J.~Pan.
\newblock {Transform Analysis and Asset Pricing for Affine Jump-Diffusions}.
\newblock {\em Econometrica}, 68(6):1343--1376, 2000.

\bibitem{DeWeese1999}
M.~DeWeese and M.~Meister.
\newblock How to measure the information gained from one symbol.
\newblock {\em Network: Computation in Neural Systems}, 10(4), 1999.

\bibitem{Ding1993}
Z.~Ding, C.~W. Granger, and R.~F. Engle.
\newblock {A long memory property of stock market returns and a new model}.
\newblock {\em Journal of Empirical Finance}, 1(1):83--106, 1993.

\bibitem{Dragulescu2002}
A.~Dragulescu and V.~M. Yakovenko.
\newblock {Probability distribution of returns in the Heston model with
  stochastic volatility}.
\newblock {\em Quantitative Finance}, 2(October):443--453, 2002.

\bibitem{Shepard2001}
O.~Elerian, S.~Chib, and N.~Shephard.
\newblock Likelihood inference for discretely observed non-linear diffusions.
\newblock {\em Econometrica}, 69:959--993, 2001.

\bibitem{Polson2003}
B.~Eraker, M.~Johannes, and N.~G. Polson.
\newblock The impact of jumps in returns and volatility.
\newblock {\em Journal of Finance}, 53:1269--1300, 2003.

\bibitem{Feller1951}
W.~Feller.
\newblock {Two Singular Diffusion Problems}.
\newblock {\em Annals of Mathematics}, 54(2):286--295, 1951.

\bibitem{FouqueJean-PierreGeorgePapanicolaou2000}
K.~R.~S. Fouque Jean-Pierre, George~Papanicolaou.
\newblock {\em {Mean-Reverting Stochastic Volatility}}.
\newblock Cambridge University Press, 2000.

\bibitem{Hansen2005}
P.~R. Hansen and A.~Lunde.
\newblock A forecast comparison of volatility models: does anything beat a
  garch(1,1)?
\newblock {\em Journal of Applied Econometrics}, 20(7):873--889, 2005.

\bibitem{Heston1993}
S.~L. Heston.
\newblock {A Closed-Form Solution for Options with Stochastic Volatility with
  Applications to Bond and Currency Options}.
\newblock {\em The Review of Financial Studies}, 6(2):327--343, 1993.

\bibitem{HullWhite1987}
J.~Hull and A.~White.
\newblock The pricing of options on assets with stochastic volatilities.
\newblock {\em Journal of Finance}, 42:281--300, 1987.

\bibitem{Hurn2012}
S.~Hurn, K.~Lindsay, and A.~McClelland.
\newblock {Estimating the Parameters of Stochastic Volatility Models using
  Option Price Data}.
\newblock {\em NCER Working Paper Seris}, 87(October):35, 2012.

\bibitem{Johnson1979}
H.~Johnson and D.~Shanno.
\newblock Option pricing when the variance is changing.
\newblock {\em Journal of Financial and Quantitative Analysis}, 22:143--151,
  1987.

\bibitem{Karlin1981}
T.~H.~M. Karlin~S.
\newblock {\em A Second Course in Stochastic Processes}.
\newblock Academic Press, 1981.

\bibitem{Kraskov2004}
A.~Kraskov, H.~St{\"o}gbauer, and P.~Grassberger.
\newblock Estimating mutual information.
\newblock {\em Phys. Rev. E}, (69), 2004.

\bibitem{LeBaron2001}
B.~LeBaron.
\newblock {Stochastic Voltility as a Simple Generator of apparent financial
  power laws and long memory}.
\newblock {\em Quantitative Finance}, 6(1):627--631, 2001.

\bibitem{Lo1991}
A.~W. Lo.
\newblock {Long Term Memory in Stock Market Prices}.
\newblock {\em Econometrica}, 59(5):1279--1313, 1991.

\bibitem{Lucic2012}
V.~Lucic.
\newblock {Boundary conditions for computing densities in hybrid models via PDE
  methods}.
\newblock {\em Stochastics An International Journal of Probability and
  Stochastic Processes}, 84(1951):705--718, 2012.

\bibitem{MacKay:itp}
D.~J.~C. MacKay.
\newblock {\em Information Theory, Inference, and Learning Algorithms}.
\newblock Cambridge University Press, 2003.
\newblock Available from
  {\tt{http://www.inference.phy.cam.ac.uk/mackay/itila/}}.

\bibitem{Masoliver2005}
J.~Masoliver and J.~Perello.
\newblock {Multiple time scales and the exponential Ornstein-Uhlenbeck
  stochastic volatility model}.
\newblock {\em Quantitative Finance}, 6:24, 2005.

\bibitem{Muzy2000}
J.~F. Muzy, J.~Delour, and E.~Bacry.
\newblock {Modelling fluctuations of financial time series : from cascade
  process to stochastic volatility model}.
\newblock {\em Eur. Phys. J. B}, 17:537--548, 2000.

\bibitem{Nelson1994}
D.~B. Nelson and D.~P. Foster.
\newblock Asymptotic filter theory for univariate arch models.
\newblock {\em Econometria}, 62(1):1 -- 41, 1994.

\bibitem{Oksendal2000}
B.~{\O}ksendal.
\newblock {\em Stochastic Differential Equations}.
\newblock Springer, 2000.

\bibitem{Perello2008}
J.~Perello, R.~Sircar, and J.~Masoliver.
\newblock {Option pricing under stochastic volatility: the exponential
  Ornstein-Uhlenbeck model}.
\newblock page~26, 2008.

\bibitem{Pfante2014}
O.~Pfante, E.~Olbrich, N.~Bertschinger, N.~Ay, and J.~Jost.
\newblock {Closure measures for coarse-graining of the tent map.}
\newblock {\em Chaos (Woodbury, N.Y.)}, 24(1):013136, 2014.

\bibitem{Phillips2009}
P.~C.~B. Phillips and J.~Yu.
\newblock Maximum likelihood and gaussian estimation of continuous time models
  in finance.
\newblock In T.~Mikosch, J.-P. Krei{\ss}, R.~A. Davis, and T.~G. Andersen,
  editors, {\em Handbook of Financial Time Series}, pages 497--530. Springer
  Berlin Heidelberg, 2009.

\bibitem{Rasmussen2006}
C.~Rasmussen and C.~Williams.
\newblock {\em Gaussian Processes for Machine Learning}.
\newblock Adaptive Computation and Machine Learning. MIT Press, Cambridge, MA,
  USA, Jan. 2006.

\bibitem{Shannon1948}
C.~Shannon.
\newblock A mathematical theory of communication.
\newblock {\em The Bell System Technical Journal}, 27:379--423, 1948.

\bibitem{Shepard2005}
N.~Shepard.
\newblock {\em Stochastic Volatility: Selected Readings}.
\newblock Oxford University Press, 1 edition, March 2005.

\bibitem{Shreve2004}
S.~Shreve.
\newblock {\em Stochastic Calculus for Finance II}.
\newblock Number~3. Springer, 2004.

\bibitem{Taylor1982}
S.~J. Taylor.
\newblock Financial returns modelled by the product of two stochastic processes
  — a study of daily sugar prices 1961-79.
\newblock {\em Time Series Analysis: Theory and Practice}, 1:203--226, 1982.

\bibitem{Numpy2011}
S.~van~der Walt, C.~S. Colbert, and G.~Varoquaux.
\newblock The numpy array: A structure for efficient numerical computation.
\newblock {\em Computing in Science \& Engineering}, 13:22--30, 2011.

\end{thebibliography}


\end{document}